\newtheorem{theorem}{Theorem}
\newtheorem{lemma}[theorem]{Lemma}
\newtheorem{dfn}[theorem]{Definition}
\newtheorem{assump}[theorem]{Assumption}
\newcommand{\p}{\mathbb{P}}
\newcommand{\rin}{R_{\mathrm{in}}}
\newcommand{\xtil}{\widetilde{x}}
\newcommand{\Eprime}{E'}
\newcommand{\lambdatil}{\widetilde{\lambda}}
\newcommand{\Lambdatil}{\widetilde{\Lambda}}
\title{Error Exponents for DNA Storage Codes \\ with a Variable Number of Reads}
\author{ Yan Hao Ling, Nir Weinberger, and Jonathan Scarlett \thanks{Y.H.~Ling is with the Department of Computer Science, National University of Singapore.  (Email: \url{lingyh@nus.edu.sg})}
\thanks{N.~Weinberger is with the Faculty of Electrical and Computer Engineering, Technion. (Email:  \url{nirwein@technion.ac.il}) }
\thanks{J.~Scarlett is with the Department of Computer Science, Department of Mathematics, and the Institute of Data  Science, National University of Singapore. (Email:  \url{scarlett@comp.nus.edu.sg})}
\thanks{The research of J.S.~was supported by the Singapore National Research Foundation (NRF) under its AI Visiting Professorship programme.  The research of N.W.~was partially supported by the Israel
Science Foundation (ISF), grant no.~1782/22.} }
\date{\today}
\begin{document}
\maketitle

\begin{abstract}
    In this paper, we study error exponents for a concatataned coding based class of DNA storage codes in which the number of reads performed can be variable.  That is, the decoder can sequentially perform reads and choose whether to output the final decision or take more reads, and we are interested in minimizing the average number of reads performed rather than a fixed pre-specified value.  We show that this flexibility leads to a considerable reduction in the error probability compared to a fixed number of reads, not only in terms of constants in the error exponent but also in the scaling laws.  This is shown via an achievability result for a suitably-designed protocol, and in certain parameter regimes we additionally establish a matching converse that holds for all protocols within a broader concatenated coding based class.
\end{abstract}

\section{Introduction}

In recent years, significant research attention has been paid to characterizing the capacity of DNA storage systems; see \cite{shomorony2022information} for a recent overview.  While capacity is perhaps the most fundamental notion of a performance limit, it is limited by its lack of consideration of \emph{how quickly} the error probability approaches zero as the problem size increases.  This limitation is addressed by studies of \emph{error exponents}, which characterize the exponential decay rate of the error probability at rates below capacity.  Error exponents have long been studied for channel coding problems \cite{gallager1968information,csiszar2011information}, and were recently studied for DNA storage in \cite{merhav,Weinberger,ling2024exact}.

In more detail, \cite{merhav} studied error exponents of random codes, whereas \cite{Weinberger,ling2024exact} focused on a class of \emph{concatenated codes} in which an an ``inner code'' for the sequencing channel is used in a black-box manner for individual molecules, and an ``outer code'' is used to handle the entire set of molecules.  This class of codes is practically well-motivated, since the inner code can make use of well-established techniques from point-to-point channel coding.  Moreover, in \cite{Weinberger,ling2024exact}  the sequencing channel need not be a discrete memoryless channel as was assumed in \cite{merhav}; for example, insertions and deletions can be considered.

While the results of \cite{ling2024exact} (building on \cite{Weinberger}) provide exact (i.e., optimal) error exponents for a class of concatenated codes under a standard setup, they turn out to leave significant room for improvement under a minor modification of the setup.  In particular, the error exponent is dominated by the event that there are unusually few distinct molecules sampled, and correspondingly, their converse result holds \emph{even when the sequencing error probability is zero}.  This suggests that the performance could be significantly improved by allowing a \emph{variable number of reads}, so that more reads can be performed when needed, without increasing the \emph{average} number of reads too much.

In this paper, we make this intuition precise by studying error exponents of concatenated codes under a variable number of reads, where we now consider the \emph{average} number of reads instead of a pre-specified number.  We show that this setup permits a considerably smaller error probability, and we attain matching achievability and converse results in certain regimes of interest.

\subsection{Contributions}

We introduce our problem setup and notation in Section \ref{sec:setup}, but for the purpose of outlining our contributions, we note that we will consider codewords containing $M$ molecules of length $L$, let $N$ denote the number of reads performed, and let $p$ denote the sequencing error probability.  Our contributions are:
\begin{itemize}
    \item We follow the setup of \cite{Weinberger,ling2024exact} with concatenated coding and an index-based inner code (see Section \ref{sec:setup}).  
    Under the standard scaling choices $L = \Theta(\log M)$, $\mathbb{E}[N] = \Theta(M)$, and $p = o(1)$, we show that the error probability with a variable number of reads decays as $e^{-\Theta(M\log\frac{1}{p})}$, thus being significantly smaller than $e^{-\Theta(M)}$ with a fixed number of reads.  This result holds for all coding rates that are achievable via the coding techniques in \cite{Weinberger,ling2024exact} (our techniques in this paper will have the same set of achievable rates).
    \item We establish an achievable error exponent (i.e., specifying the hidden constant in $\Theta(\cdot)$ notation in $e^{-\Theta(M\log\frac{1}{p})}$), and in certain parameter regimes we additionally give a matching converse, thus establishing the exact error exponent in the variable-reads setting.  
    \item Our bounds hold under suitably-defined adversarial models of sequencing errors, and we consider two kinds of adversary: A ``strong adversary'' that knows the decoder and can see into the future (e.g., knowing exactly when sequencing errors will occur), and a ``weak adversary'' that only knows the message and codebooks, and must act solely based on the past.  Our achievability result (Theorem \ref{thm:achievability}) holds even under the strong adversary, and our converse is proved first for the strong adversary (Theorem \ref{thm:converse}) and then for the weak adversary (Theorem \ref{thm:converse2}).
\end{itemize}
Before proceeding to the problem setup, we make some remarks to help set the context for our results within the existing literature:
\begin{itemize}
    \item The class of codes that we consider is motivated by previous practical and theoretical uses of concatenated codes for DNA storage (e.g., \cite{lenz2020achievable,ren2022dna}).  We refer the reader to \cite{Weinberger} for further discussion on the practical motivation.
    \item While the concatenated codes that we consider are powerful and well-motivated, we note that they can be suboptimal, precluding certain advanced techniques such as clustering \cite[Sec.~5.1]{shomorony2022information} and ``full'' random coding \cite{merhav}. In particular, the exponent is only positive for rates up to a certain threshold that can be strictly worse than the one in \cite{merhav} (see \eqref{eq:rate_achieved} below).
    \item Similarly to the related works \cite{merhav,Weinberger,ling2024exact}, our work is complementary to the extensive work on \emph{coding-theoretic} considerations for DNA storage codes (e.g., see \cite{lenz2019coding,song2020sequence,kovavcevic2018codes} and the references therein), which typically seek distinct goals such as good distance properties.
\end{itemize}

\section{Problem Setup} \label{sec:setup}

We follow the problem setup from \cite{Weinberger,ling2024exact}, while allowing a variable number of reads.  For completness, we provide the full description.

\subsection{Basic Notation} 

We adopt a standard information-theoretic model of DNA storage, with the building blocks being as follows:
\begin{itemize}
    \item The goal is to reliably store and recover a \emph{message} $m$ drawn uniformly at random from a finite set of messages.
    \item A \emph{molecule} is a length-$L$ string whose symbols come from some finite alphabet $\mathcal{X}$ (e.g., $\mathcal{X} = \{A,C,G,T\}$).
    \item Each message is associated with an unordered collection of $M$ input molecules that are stored.  This collection is referred to as the (outer) \emph{codeword} for $m$.
    \item At the decoder, $N$ \emph{reads} are performed.  For each read, (i) one of the $M$ input molecules is \emph{sampled} uniformly at random; and (ii) the molecule in $\mathcal{X}^L$ is passed through a \emph{sequencing channel} (independently of all other reads) and the decoder sees the output, whose alphabet we denote by $\mathcal{Y}^{(L)}$.  For example, the channel might be a discrete memoryless channel, but we do not require this to be the case, so $\mathcal{Y}^{(L)}$ need not be an $L$-fold product.  
    \item In the fixed-length setup, the number of reads $N$ is pre-specified, whereas in the variable-length setup, the decoder may choose when to stop performing reads and thus $N$ is a random variable.
    \item The decoder uses the reads to produce an estimate $\hat{m}$ of the message.  
\end{itemize}
We proceed to provide the full details and introduce aspects that are more specific to our setup.

\subsection{Concatenated Coding Based Class of Protocols}

We now describe the concatenated coding based class of DNA storage codes that we consider.  The following definition summarizes the relevant notions of inner codes.

\begin{dfn} \label{def:inner}
We define the following:
\begin{enumerate}
    \item An \emph{inner code} $(X_{\rm in},D_{\rm in})$ with parameters $(\rin, L)$ is given by the following:
    \begin{itemize}
        \item A sequence of $\exp(\rin L)$ distinct molecules, each of length $L$, denoted by $X_{\rm in} = (x_1, x_2, \dots, x_{\exp(\rin L)})$. 
        \item An inner decoding function operating on the sequencing channel output, $D_{\rm in} : \mathcal{Y}^{(L)} \rightarrow \{x_1,\ldots, x_{\exp(\rin L)}\}$.
    \end{itemize}
    \item  Given the inner codebook $(X_{\rm in},D_{\rm in})$, for each $x \in X_{\rm in}$, the (inner) \emph{error probability} for $x$ is the probability that $D_{\rm in}(y) \neq x$, where $y$ is distributed according to the sequencing channel with input $x$.  The highest (among all $x \in X_{\rm in}$) of these error probabilities gives the \emph{maximal error probability} of the inner code.
        \item A sequence of inner codebooks $(X_{\rm in}^{(L)}, D_{\rm in}^{(L)})_{L=1}^\infty$ achieves a rate $\rin$ if each $(X_{\rm in}^{(L)}, D_{\rm in}^{(L)})$ has parameters $(\rin, L)$, and the maximal error probability of $(X_{\rm in}^{(L)}, D_{\rm in}^{(L)})$ approaches zero as $L \to \infty$. Such a rate is said to be \emph{achievable} for the sequencing channel.
\end{enumerate}
\end{dfn}

We highlight that we use the inner code in a ``black-box'' manner, only assuming that it has maximal error probability approaching zero as $L$ increases, in accordance with the above notion of achievable (inner) rate.

The class of concatenated codes that we consider is then defined as follows; when specialized to the case of a fixed number of reads, this recovers the class studied in \cite{Weinberger,ling2024exact}.

\begin{dfn}
Given an inner code $(X_{\rm in},D_{\rm in})$, the protocol is said to perform \emph{separate inner and outer coding} if it satisfies the following two properties:
\begin{itemize}
    \item For any message $m$ at the encoder, the resulting input (i.e., the \emph{outer codeword}) is a multiset $A_m$ of size $M$ whose elements are chosen from the inner codebook $X_{\rm in}$.
    \item After performing some number $\tilde{N}$ of reads and observing the outputs $y_1, y_2,\ldots, y_{\tilde{N}}$, any decision made by the decoder (i.e., whether to continue performing reads or not, and if not, what to output as the estimate) depends only on $D_{\rm in}(y_1), D_{\rm in}(y_2),\ldots, D_{\rm in}(y_{\tilde{N}})$.
\end{itemize}
\label{dfn:use_inner}
\end{dfn}

In this paper, we further impose the requirement that the code is \emph{index-based} according to the following definition (which was also adopted in \cite{Weinberger,ling2024exact}).

\begin{dfn}
    A codebook is \emph{index-based} if there exist $M$ disjoint sets of molecules $(B_i)_{i=1}^M$ of equal size, such that every outer codeword $A_m$ contains exactly one molecule from each $B_i$. 
    \label{dfn:index_based}
\end{dfn}

Index-based codes are often considered to be favorable for keeping the encoder and decoder simple; see \cite[Sec.~4.3]{shomorony2022information} and \cite{Weinberger} for further discussion.  We note that Definition \ref{dfn:index_based} ensures that every $A_m$ contains $M$ distinct molecules, meaning that the multiset in Definition \ref{dfn:use_inner} is a set.

\subsection{Sampling, Sequencing, and Decoding}

As noted in \cite{ling2024exact}, under separate inner and outer coding it is useful to think of sequencing and inner decoding as a single step; every time a read is performed, the following occurs:
\begin{itemize}
    \item One of the $M$ (distinct) input molecules $x \in A_m$ is chosen uniformly at random;
    \item The molecule is passed through the sequencing channel;
    \item The inner code's decoder is applied to the channel output to produce $\xtil$.
\end{itemize}
For a fixed inner code, the second and third steps can be viewed as inducing an overall distribution $P(\xtil|x)$, where it we have $P(x|x) = 1-o(1)$ as $L \to \infty$ when the inner code achieves rate $\rin$.
However, we will not use this probabilistic approach; rather, in accordance with our goal of using the inner code in a ``black-box'' manner, we will instead adopt a \emph{worst-case} (adversarial) model of what happens when the inner code fails.  Specifically:
\begin{itemize}
    \item For each read performed, a \emph{sequencing error} occurs with some probability $p = o(1)$, independently of all other reads.  (Hence, we have $\xtil=x$ with probability $1-p = 1-o(1)$.)
    \item When a sequencing error occurs, we assume that $\xtil$ may equal \emph{any} element of the inner code.  (We allow this element to be $x$ itself, so strictly speaking $p$ is an \emph{upper bound} on the error probability of the inner code.)
\end{itemize}
In the second dot point, we will view $\xtil$ as being chosen by an adversary; see Definitions \ref{def:adversary1}, \ref{def:adversary2}, and \ref{def:adversary3} below.

Our focus in this paper is on the case where the number of reads is variable.  Thus, at each time step, having performed some number of reads, the decoder may choose to perform another one, or may choose to stop and output the estimate of the message.  Thus, the number of reads $N$ is a random variable (even if the decoder is deterministic, these decisions depend on the random message and noise).  We are interested in keeping the \emph{average} number of reads small, and accordingly write
\begin{equation}
    \overline{N} = \mathbb{E}[N]
\end{equation}
to denote this average.  We assume that $\overline{N} < \infty$, meaning that the decoder stop in finite time with probability one.

For a uniformly random message $m$ and its estimate $\hat{m}$, we define the \emph{error probability}
\begin{equation}
    P_e = \p( \hat{m} \ne m ).
\end{equation}
We re-iterate that we are not only interested in attaining $P_e \to 0$, but also in the speed of convergence to zero.

\subsection{Scaling of parameters} \label{sec:scaling}

While in principle there any many possibilities for how $(L,M,\overline{N})$ scale with respect to one another, we will focus our attention on scaling regimes that are the most widely-adopted in information-theoretic studies (e.g., \cite{Weinberger,ling2024exact,merhav,fundamental_limit_dna_2017}), and are practically well-motivated (e.g., $L = \Theta(\log M)$ corresponding to relatively short reads). 
Specifically, we focus on the case that the following quantities are constant as $M \to \infty$:
\begin{itemize}
    \item $c = \overline{N} / M$, representing the (average) coverage depth;
    \item $\beta = \frac{L}{\log M}$, characterizing the molecule length;
    \item $\rin$, which is the inner codebook rate as per Definition \ref{def:inner};
    \item $R$, which we write for the outer code rate such that there are $\exp(RML)$ messages.
\end{itemize}
Observe that the inner code contains $\exp(\rin L) = M^{\beta \rin}$ molecules, and accordingly we define
\begin{equation}
\alpha = {\beta \rin}
    \label{eq:alpha_def}
\end{equation} 
so that this simplifies to $M^\alpha$. We assume that $\alpha > 1$, since the use of index-based codes (Definition \ref{dfn:index_based}) requires having at least $M$ inner codewords.\footnote{Moreover, for $\alpha \le 1$ the capacity, as defined in the next paragraph, is zero even when the code need not be index-based.}

The \emph{capacity} is defined as the supremum of $R$ for which there exists a sequence of codes (indexed by $M$) attaining $P_e \to 0$.  
Under a simpler model in which the $M$ molecules are observed directly in a uniformly random order with no sequencing errors, the capacity is given by \cite[Lemma 1]{fundamental_limit_dna_2017}\footnote{The result in \cite{fundamental_limit_dna_2017} considers a binary code where $\rin = 1$, but the proof can easily be adapted to obtain this more general version.}
\begin{equation}
    \rin - \frac{1}{\beta} = \frac{\alpha-1}{\beta}.
\end{equation}
Following \cite{ling2024exact}, we express the outer rate of the protocol as a fraction of the capacity:
\begin{equation}
    R_0 = \frac{R}{\rin - \frac{1}{\beta}} = \frac{R \beta}{\alpha - 1}.
    \label{eq:rbar_def}
\end{equation}
Thus, the number of possible messages that the encoder can receive is 
\begin{equation}
    \exp(RML) = \exp((\alpha-1)R_0 M\log M). \label{eq:num_messages}
\end{equation}
For the remainder of the paper, we will treat all of $(c,\rin,R,R_0,\alpha,\beta)$ as constants. 




\subsection{Discussion on Capacity with a Fixed vs.~Variable Number of Reads}

We briefly pause to discuss the overall storage capacity (for general codes that need not be concatenated nor index-based).  This capacity can be defined under a fixed number of reads with parameter $c = N/M$, or under a variable number of reads with parameter $c = \overline{N}/M$, and it is not obvious whether these two capacity values (with all other parameters being identical) are always the same.  Clearly, allowing a variable number of reads cannot decrease capacity. 

Building on ideas from variable-rate channel coding \cite[Thm.~10]{verdu2010variable}, we can also provide a partial statement in the opposite direction by showing that any achievable rate in the variable-$N$ setting implies a certain positive result for the fixed-$N$ setting.  Consider any variable-$N$ code performing $\overline{N}$ reads on average and satisfying $P_e \to 0$.  By Markov's inequality, we have $N \le \overline{N}(1+\delta)$ with probability at least $1 - \frac{1}{1+\delta} = \frac{\delta}{1+\delta}$.  Then, we can form a fixed-length code that performs $\overline{N}(1+\delta)$ reads, produces the same output as the variable-$N$ decoder whenever the latter stops by time $\overline{N}(1+\delta)$, and declares an error otherwise.  This code performs a fixed number $\overline{N}(1+\delta)$ of reads, and succeeds with probability at least $\frac{\delta}{1+\delta} - o(1)$.

Since $\delta$ is arbitrarily small, this argument implies that the following conditions would suffice for the fixed-$N$ and variable-$N$ capacities to be identical:
\begin{itemize}
    \item[(i)] The \emph{strong converse} holds in the fixed-$N$ setting;\footnote{That is, the \emph{$\epsilon$-capacity} defined with respect to the requirement $P_e \le \epsilon + o(1)$ (rather than the usual $P_e \to 0$) is identical for all fixed $\epsilon \in (0,1)$.}
    \item[(ii)] The capacity is continuous with respect to the parameter $c = N/M$ (in the fixed-$N$ setting) or $c = \overline{N}/M$ (in the variable-$N$ setting).
\end{itemize}
These conditions seem reasonable to expect, but formally establishing them may be difficult.  Condition (i) is complicated by the fact that  existing upper bounds on capacity typically rely on Fano's inequality (e.g., see \cite{shomorony2022information,merhav}), which only gives a weak converse.\footnote{An alternative to Condition (i) would be a condition that $N$ is \emph{sufficiently concentrated} around its mean for the optimal variable-$N$ code, so that the above-mentioned success probability bound $\frac{\delta}{1+\delta} - o(1)$ becomes $1-o(1)$.  However, such a concentration result also seems difficult to prove in general.}
Condition (ii) is partially supported by existing capacity \emph{bounds} being continuous in $c$ (again see \cite{shomorony2022information,merhav}), but in general we do not have a single-letter expression for the capacity.  

Perhaps unsurprisingly given the above discussion, the \emph{achievable rates} in our work (i.e., the rates under which we attain $P_e \to 0$) will be identical to those attained in the fixed-$N$ regime with index-based concatenated codes in \cite{Weinberger,ling2024exact}.  (See \eqref{eq:rate_achieved} below.)

\subsection{Discussion on Random vs.~Adversarial Sequencing Errors}

There are several possible models for what happens when a sequencing error occurs, with the main distinction being random vs.~adversarial (both of which are explored in \cite{ling2024exact}).  For example, a simple randomized model would be that in which each sequencing error leads to a uniformly random incorrect molecule.  In this paper, we focus entirely on adversarial models for the sequencing errors; compared to a randomized model, this amounts to a strength in the achievability results, but a limitation in the converse results.  We believe that adversarial models are partially justified by the preference for using the inner code in a ``black-box'' manner without needing to know its inner workings, which points to ``worst-case'' thinking regarding what happens when errors occur.

In the setting of a fixed number of reads, random and adversarial sequencing error models give identical error exponents \cite{ling2024exact}, at least under the standard scaling laws adopted in our work.  Intuitively, this is because sequencing errors are not dominant, and instead the dominant error event is failing to sample enough distinct molecules.  With that no longer being the case under a variable number of reads, it may be that random and adversarial models behave more differently; this direction is left for future work.

\section{Main Results} \label{sec:results}

Here we present and discuss our main results, deferring the proofs to Sections \ref{sec:pf_ach} and \ref{sec:pf_conv}.

\subsection{Strong and Weak Adversaries}

Recall that for each molecule sampled, a sequencing error occurs with probability $p$, and when this is the case, the decoded molecule may be arbitrary.  We consider two adversarial models for how this molecule may be chosen, one corresponding to a stronger adversary, and one corresponding to a weaker one.  Both our achievability and converse results will ultimately hold under both adversaries, but it will be convenient to first prove the converse for the stronger adversary.

We first state some basic assumptions that hold under both adversaries.

\begin{assump} \label{def:adversary1}
(Basic Assumptions) We adopt the following assumptions:
\begin{itemize}
    \item The message $m$ is chosen uniformly at random, and is revealed to the adversary.
    \item The protocol uses separate inner and outer coding (Definition \ref{dfn:use_inner}), and the code is index-based (Definition \ref{dfn:index_based}).
    \item The adversary knows the inner and outer codebooks (and thus the encoder).
    \item The decoder is deterministic.
    \item Each read has a sequencing error with probability $p$ independently of all other reads.
    \item Whenever a sequencing error occurs, the molecule is replaced by some molecule in the inner codebook (possibly including the correct one).
\end{itemize}
\end{assump}

We believe that the first two and last two assumptions are natural in an adversarial setting.  The restriction to concatenated codes is justified by the fact although they can be suboptimal in terms of capacity (let alone error exponent), they are often favorable in terms of complexity.  The restriction to \emph{index-based} codes is also partly justified by the goal of reducing complexity, though ideally a converse could be proved (in future work) without needing this assumption, and similarly for the assumption of deterministic decoding. 
 Our current converse proofs rely on both of these assumptions holding.

Next, we define the additional assumptions for our stronger adversary.

\begin{assump} \label{def:adversary2}
(Strong Adversary) For our strong adversary, we adopt the following additional assumptions:
\begin{itemize}
    \item The adversary knows the (deterministic) decoder.
    \item The adversary knows the infinite-length sequence of molecules that would be sampled (before sequencing errors) if the decoder were to continue performing reads.
    \item The adversary also knows the infinite-length sequence of which reads will be subject to sequencing errors.
\end{itemize}
\end{assump}

While the first of these assumptions is fairly natural, the second and third may be less so, and may appear to ``give the adversary too much power''.  This amounts to a strength for an achievability result, but a weakness for a converse result.  Fortunately, we will also establish our converse under the following relaxed assumptions.

\begin{assump} \label{def:adversary3}
(Weak Adversary) For our weak adversary, we relax the assumption stated in Assumption \ref{def:adversary2} as follows:
\begin{itemize}
    \item The adversary does not know the decoder.
    \item If a sequencing error occurs at the $i$-th read, then the adversary's choice of replaced molecule is only allowed to depend on what was observed in previous reads (i.e., those indexed by $1,\dotsc,i-1$).
    \item The adversary does not know which future reads will be subject to sequencing errors.
\end{itemize}
\end{assump}

We now proceed to state our main results.

\subsection{Statement of Main Results}

Our main achievability result is stated as follows, and holds even under Assumptions \ref{def:adversary1} and \ref{def:adversary2} (which, for achievability, is a stronger result than one using Assumptions \ref{def:adversary1} and \ref{def:adversary3}).

\begin{theorem} \label{thm:achievability}
    Let $c = \log \frac{1}{1-R_0-\delta}$ for any $\delta \in (0,1-R_0)$.  Under Assumptions \ref{def:adversary1} and \ref{def:adversary2}, there exists a protocol with error probability $P_e \le p^{(\delta + o(1)) M}$ and an average number of reads $\overline{N} \le (c + o(1))\cdot M$ when the message is chosen uniformly at random.\footnote{In fact, our proof shows the same for the worst-case message, not only a random message.}
\end{theorem}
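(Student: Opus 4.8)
The plan is to design an explicit protocol and analyze it. The protocol operates in two phases. In the \emph{sampling phase}, the decoder performs reads until it has seen a ``sufficiently large'' collection of inner-decoded molecules $\xtil$; concretely, we keep reading until the number of distinct index-sets $B_i$ that have been ``hit'' reaches $(1-\delta')M$ for a suitable $\delta' \in (\delta, 1-R_0)$ with $c = \log\frac{1}{1-R_0-\delta'} + o(1)$ — actually, since $c = \log\frac{1}{1-R_0-\delta}$ is prescribed, we instead stop once a $\big(1 - e^{-c} + o(1)\big)M = (R_0 + \delta + o(1))M$-sized... more carefully, we stop once the number of reads reaches the first time $N$ such that the fraction of ``missed'' index blocks drops below $e^{-c}$; by a standard coupon-collector / Poissonization argument, after $cM$ reads each block is missed independently with probability $\approx e^{-c} = 1-R_0-\delta$, so with high probability we have hit at least $(R_0+\delta)M$ blocks, which (since the outer code has $\exp((\alpha-1)R_0 M\log M)$ messages and the relevant distance-type property from \cite{Weinberger,ling2024exact} guarantees that seeing this many \emph{correct} symbols uniquely determines $m$) is enough to decode if there were no errors. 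In the \emph{verification phase}, the decoder does not simply output: it continues reading and cross-checks, rejecting any candidate message that is inconsistent with too many reads, and only halts when one candidate survives with a comfortable margin.

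The key steps, in order, are: (1) formalize the stopping rule and show $\overline N \le (c+o(1))M$ — this is a concentration statement for the coupon-collector stopping time plus a bound on the expected extra reads spent in the verification phase, which must be $o(M)$; (2) establish the combinatorial fact, inherited from \cite{Weinberger,ling2024exact}, that any $(R_0+\delta)M$ correctly-decoded index symbols pin down $m$ uniquely among the $\exp((\alpha-1)R_0M\log M)$ messages, and more robustly that two distinct messages disagree on a constant fraction of index blocks; (3) bound $P_e$. For step (3), the decoder errs only if the adversary can corrupt reads so that an incorrect $\hat m$ looks at least as consistent as the true $m$. Because two messages differ in $\Omega(M)$ blocks, forcing a wrong decision requires the adversary to have injected sequencing errors matching the wrong codeword on $\Omega(M)$ of the \emph{sampled} reads; each read is erroneous independently with probability $p$, so the probability of $\Omega(M)$ targeted errors among $O(M)$ reads is $\binom{O(M)}{\Omega(M)} p^{\Omega(M)} \le p^{(\delta+o(1))M}$ after choosing constants so the binomial coefficient is absorbed into the $o(1)$ in the exponent. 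A union bound over the $\exp(O(M\log M))$ candidate messages is harmless since $p = o(1)$ makes $p^{\Theta(M)}$ beat $\exp(\Theta(M\log M))$... wait — it does not, $\exp(M\log M)$ dominates $p^{M}$ when $p$ is a fixed $o(1)$ like $1/\log M$. So the union bound must be done more carefully: we should not union-bound over all messages but rather argue that for the true message to be ``beaten'' the adversary must corrupt a specific set of blocks determined by \emph{which} wrong message it is aiming for, and the number of reads hitting any particular block is $O(1)$ in expectation, so the relevant count of adversary-controllable, correctly-targeted errors is what carries the exponent; the verification phase is designed so that the margin required grows mildly (e.g.\ like $\delta M$), making the bad event a large-deviations event with rate $\ge \delta M \log\frac1p$ even after accounting for the choice of target.

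The main obstacle I expect is precisely this tension in step (3): reconciling the $\exp(\Theta(M\log M))$ number of competing messages against an error bound of the form $p^{\Theta(M)}$ with $p$ only $o(1)$. The resolution must exploit the index-based structure — the adversary, to promote a particular wrong message, is constrained to flip decoded symbols to that message's symbols block-by-block, and the number of reads landing in the $\delta M$ ``disagreement'' blocks that the adversary needs is itself $\Theta(\delta M)$ (not more), so the event is ``$\Theta(\delta M)$ independent $p$-coins all turn up heads in a prescribed pattern'' — probability $p^{\Theta(\delta M)}$ — and the number of distinct patterns the adversary might pursue is at most the number of size-$\delta M$ block subsets, $\binom{M}{\delta M} = \exp(\Theta(M))$, \emph{not} $\exp(\Theta(M\log M))$, because once the disagreement blocks are fixed the target symbols in them are determined by requiring a valid codeword — and this $\exp(\Theta(M))$ is absorbed into $p^{(\delta+o(1))M} = \exp(-(\delta+o(1))M\log\frac1p)$ since $\log\frac1p \to \infty$. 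Getting the bookkeeping of the stopping rule, the verification margin, and this counting argument to line up so that both $\overline N \le (c+o(1))M$ and $P_e \le p^{(\delta+o(1))M}$ hold simultaneously — under the strong adversary who sees all future sampling and error locations — is the delicate part; the strong adversary's foresight means the verification phase cannot rely on the adversary being ``surprised,'' so the protocol's halting condition must be robust to worst-case future corruption, which is what forces the margin-based design above.
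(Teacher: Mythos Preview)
Your proposal is headed in the right general direction, but the protocol is never pinned down precisely enough to deliver either bound, and one step in your error analysis is incorrect. The claim ``once the disagreement blocks are fixed the target symbols in them are determined by requiring a valid codeword'' is false: with $M^{\alpha-1}$ molecules per index block, fixing which $\delta M$ blocks to overwrite still leaves up to $(M^{\alpha-1})^{\delta M}=\exp(\Theta(M\log M))$ ways to fill them, many of which may be valid codewords, so your collapse of the union bound to $\exp(O(M))$ does not follow from that reasoning. Separately, the ``verification phase'' stopping rule is never specified, and you assert without argument that it costs $o(M)$ extra reads; under the strong adversary (who sees all future samples and error locations) you need a halting condition whose stopping time you can control uniformly over adversarial behavior, and a vague ``comfortable margin'' does not give you that.

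The paper's protocol is simpler and sidesteps the union-bound tension you wrestle with entirely. It uses an outer codebook with $|A_i\cap A_j|<(R_0+\epsilon)M$ for all $i\ne j$ (which exists by a Gilbert--Varshamov-type argument), and a single stopping rule: say $A_i$ is \emph{$\delta$-consistent} if at most $\delta M$ of the decoded molecules so far lie outside $A_i$; keep reading until exactly one $A_i$ is $\delta$-consistent, and output $i$. For the time bound, once $(R_0+\epsilon+\delta)M$ distinct molecules of the true $A_m$ have been correctly decoded, every other $A_j$ automatically has more than $\delta M$ decoded molecules outside it (by the intersection property), so the decoder halts; a coupon-collector sum gives $\overline N\le(\log\frac{1}{1-R_0-\delta}+o(1))M$. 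For the error bound, the key observation is that if the output is wrong then the true $A_m$ must itself have failed $\delta$-consistency at the stopping time, meaning more than $\delta M$ decoded molecules lie outside $A_m$---but every sampled molecule is in $A_m$, so this forces at least $\delta M$ sequencing errors. There is \emph{no} union bound over wrong messages: the bad event is simply ``$\delta M$ sequencing errors occur before $(R_0+\epsilon+\delta)M$ error-free new molecules are seen,'' and a short $0/1$-string counting argument bounds this by $2^{O(M)}\cdot p^{\delta M}=p^{(\delta+o(1))M}$, where the $2^{O(M)}$ counts $0/1$ patterns of length $O(M)$, not messages. The idea you are missing is to fold verification into the stopping rule so that an error is \emph{equivalent} to the true codeword being rejected, which yields the $\delta M$-error requirement directly and makes the adversary's choice of target irrelevant to the analysis.
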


Next, we state our converse result under the strong adversary, and then under the weak adversary.  The latter result of course implies the former, but we find it useful to prove the former one first.

\begin{theorem}
    Let $c = \log \frac{1}{1-R_0-\delta}$ with $\delta$ being sufficiently small such that $\delta < ce^{-c}$. Under Assumptions \ref{def:adversary1} and \ref{def:adversary2} (strong adversary), any decoder with error probability satisfying $P_e \le o(p^{\delta M+1})$ must require at least $\overline{N} \ge (c-o(1))M$ reads in expectation when the message is chosen uniformly at random.
    \label{thm:converse}
\end{theorem}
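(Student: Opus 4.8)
The plan is to argue by contradiction: assume there is an index-based, separate inner/outer decoder facing the strong adversary with $P_e \le o(p^{\delta M+1})$ but $\overline{N} \le (c-\epsilon)M$ for some fixed $\epsilon\in(0,1)$, and produce an adversary under which $P_e \ge p^{(\delta-\gamma+o(1))M}$ for a constant $\gamma=\gamma(\epsilon)>0$; since $p=o(1)$ this contradicts the hypothesis. Write $\mathbf{J}=(J_1,J_2,\dots)$ for the i.i.d.\ uniform sampling sequence and $\boldsymbol\xi=(\xi_1,\xi_2,\dots)$ for the i.i.d.\ $\mathrm{Bernoulli}(p)$ error indicators. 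For a message $\mu$ let $V_{\mathbf J}(\mu)$ be the noiseless observation sequence (read $i$ returns $\mu$'s inner codeword in bin $J_i$), let $\tau_{\mathbf J}(\mu)$ be the decoder's stopping time on it, and $\hat m_{\mathbf J}(\mu)$ its output. Fix a small $\epsilon'\in(0,\epsilon)$ and set $n_*=\lfloor(c-\epsilon')M\rfloor$; all inequalities below hold for $\epsilon'$ small enough, using $\delta<ce^{-c}$ and continuity.

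\emph{Step 1 (reduction to a fixed horizon).} The adversary we design does nothing except on a ``bad'' event whose probability is $p^{(\delta-\gamma+o(1))M}=o(1)$, so on the complement the decoder sees $V_{\mathbf J}(m)$ and stops at $\tau_{\mathbf J}(m)$; a short computation (bounding $\mathbb{E}[\tau_{\mathbf J}(m)\mathbb{1}_{\mathrm{bad}}]$ by the uniform binomial tail bound for the bad event times $\mathbb{E}[\tau_{\mathbf J}(m)]$) gives $\mathbb{E}[\tau_{\mathbf J}(m)]\le\overline{N}(1+o(1))\le(c-\epsilon/2)M$. By Markov's inequality $\p(\tau_{\mathbf J}(m)\le n_*)\ge\rho_0$ for a constant $\rho_0>0$, so $\mathrm{Fast}_{\mathbf J}:=\{\mu:\tau_{\mathbf J}(\mu)\le n_*\}$ has expected relative size $\ge\rho_0$; combining this with the fact that the mis-decoded set $\mathrm{Bad}_{\mathbf J}:=\{\mu:\hat m_{\mathbf J}(\mu)\ne\mu\}$ has expected relative size $\le o(p^{\delta M+1})$ (which is just $P_e$ of the do-nothing adversary), the set $\mathrm{Good}_{\mathbf J}:=\mathrm{Fast}_{\mathbf J}\setminus\mathrm{Bad}_{\mathbf J}$ has relative size $\ge\rho_0/2$ for a constant fraction of $\mathbf{J}$; call such $\mathbf{J}$ \emph{nice}.

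\emph{Step 2 (the ``lost-bin'' noise event; where $\delta<ce^{-c}$ is used).} On a typical $\mathbf{J}$, the first $n_*$ reads sample $(1-e^{-(c-\epsilon')})M+o(M)=(R_0+\delta-\gamma_1)M+o(M)$ distinct bins and $(c-\epsilon')e^{-(c-\epsilon')}M+o(M)$ bins \emph{exactly once}, where $\gamma_1=(1-R_0-\delta)(e^{\epsilon'}-1)\to0$. Since $\delta<ce^{-c}$, the latter count exceeds $(\delta-\gamma_1/2)M$ for $\epsilon'$ small. Let $\mathcal{A}$ be the event that at least $(\delta-\gamma_1/2)M$ of these once-sampled bins have an errored read; a binomial lower bound (using $(1-p)^M=p^{o(M)}$ because $p=o(1)$) gives $\p(\mathcal{A}\mid\mathbf{J})\ge p^{(\delta-\gamma_1/2+o(1))M}$, which is $\gg p^{\delta M+1}$ because $\gamma_1>0$. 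On $\mathcal{A}$, the set $C$ of bins observed cleanly within the first $n_*$ reads (those sampled by some errorless read) obeys $|C|\le(R_0-\gamma_1/2)M+o(M)<R_0M$.

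\emph{Step 3 (a confusable, fast partner for a constant fraction of messages).} Fix a nice $\mathbf{J}$ and a value of $C$ with $|C|<R_0M$. Because the code is index-based with $\exp((\alpha-1)R_0M\log M)$ outer codewords while $|C|<R_0M=R_0\cdot(\text{outer length})$, the value $c_\mu|_C$ partitions the messages into at most $\exp((\alpha-1)|C|\log M)$ classes, and all but a $\exp(-\Omega(M\log M))$ fraction of messages lie in classes of size $\exp(\Omega(M\log M))$. The crucial observation is the identity $\sum_\mu |\mathrm{Class}(\mu,C)\cap\mathrm{Good}_{\mathbf J}|/|\mathrm{Class}(\mu,C)|=|\mathrm{Good}_{\mathbf J}|$, which is \emph{insensitive to how $\mathrm{Good}_{\mathbf J}$ is distributed across classes}; with $|\mathrm{Good}_{\mathbf J}|\ge(\rho_0/2)(\#\text{messages})$ and Markov's inequality it forces a constant fraction of messages $m$ to satisfy $|\mathrm{Class}(m,C)\cap\mathrm{Good}_{\mathbf J}|\ge(\rho_0/8)|\mathrm{Class}(m,C)|$; intersecting with the (almost all) $m$ lying in large classes yields $|\mathrm{Class}(m,C)\cap\mathrm{Good}_{\mathbf J}|\ge2$, i.e.\ there exists $m'\in\mathrm{Good}_{\mathbf J}$ with $m'\ne m$ and $c_{m'}|_C=c_m|_C$.

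\emph{Step 4 (indistinguishability and conclusion).} Take the adversary that, on $\{\mathbf{J}\text{ nice}\}\cap\mathcal{A}$ whenever such an $m'$ exists, steers every errored read toward $c_{m'}$ (and does nothing otherwise); this is feasible for the \emph{strong} adversary, which knows the decoder and the whole of $\mathbf{J},\boldsymbol\xi,m$. On this event, read $i\le n_*$ returns $c_m[J_i]=c_{m'}[J_i]$ if $\xi_i=0$ (then $J_i\in C$ and $c_m|_C=c_{m'}|_C$) and $c_{m'}[J_i]$ if $\xi_i=1$, so the decoder's first $n_*$ observations equal those of $V_{\mathbf J}(m')$; since $m'\in\mathrm{Fast}_{\mathbf J}$ it halts at $\tau_{\mathbf J}(m')\le n_*$ with output $\hat m_{\mathbf J}(m')=m'\ne m$, an error. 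Combining Steps 1--3, this event has probability $\ge(\text{const})\cdot p^{(\delta-\gamma_1/2+o(1))M}$, so $P_e\ge p^{(\delta-\gamma_1/2+o(1))M}\gg p^{\delta M+1}$, the required contradiction; taking $\epsilon\downarrow0$ then gives $\overline{N}\ge(c-o(1))M$.

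\emph{Main obstacle.} The hard part is the interaction between the decoder's adaptive stopping rule and the adversary: steering toward $m'$ changes \emph{when} the decoder stops, so one cannot simply inherit ``stops by time $n_*$'' from the true-message run, which is why $m'$ must be chosen inside $\mathrm{Fast}_{\mathbf J}$; and a priori a pathological code could confine the fast-halting, correctly-decoding messages to a few classes, leaving most messages with no fast confusable partner. The weighted double-counting identity in Step 3 is exactly what rules this out. A secondary layer of care is making the Poisson-type heuristics for ``distinct bins'' and ``once-sampled bins'' into rigorous concentration statements, tracking the $o(M)$ terms in the exponents, and checking that a single $\epsilon'$ makes $\delta-\gamma_1/2<\delta$, $|C|<R_0M$, and ``enough once-sampled bins'' hold at once --- all consequences of $\delta<ce^{-c}$.
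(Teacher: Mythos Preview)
Your proposal is correct and follows essentially the same strategy as the paper: show via concentration that for typical sampling sequences the first $\approx cM$ reads hit only about $(R_0+\delta)M$ distinct indices with enough singletons among them (this is where $\delta<ce^{-c}$ enters), find a ``fast and correctly-decoded'' confusable partner $m'$ via a counting argument on the index-restriction classes, and have the strong adversary steer errored reads toward $m'$ so the decoder halts early and outputs $m'$.

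Two differences from the paper are worth flagging. First, the paper decouples the adversary's activity from the rare error pattern via an auxiliary coin $\Psi\sim\mathrm{Bernoulli}(p)$ (so the adversary is inactive with probability $1-p$, at the cost of one extra factor $p$ in the bound $p^{\delta M+1}$); you instead shave the error budget to $(\delta-\gamma_1/2)M$ so that the ``active'' event $\mathcal{A}$ itself has probability $p^{(\delta-\gamma_1/2+o(1))M}=o(1)$. Both devices serve the same purpose---keeping the adversary's influence on $\overline{N}$ negligible so the do-nothing stopping times $\tau_{\mathbf J}(m)$ inherit the Markov bound---and yours even gives a nominally stronger lower bound on $P_e$. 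Second, your Step~3 double-counting identity plays the role of the paper's direct cardinality bound $|V|\le \exp(R'M(\alpha-1)\log M)$ on messages with a unique restriction; the two arguments are interchangeable.

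One small sloppiness: you claim the do-nothing error rate $P_e^0$ (the expected relative size of $\mathrm{Bad}_{\mathbf J}$) is $\le o(p^{\delta M+1})$. Under \emph{your} adversary you can only conclude $P_e^0\le P_e+\p(\text{active})=o(1)$, not $o(p^{\delta M+1})$, since the active event has probability $p^{(\delta-\gamma_1/2+o(1))M}\gg p^{\delta M+1}$. Fortunately Step~1 only needs $P_e^0=o(1)$ to make $|\mathrm{Good}_{\mathbf J}|$ a constant fraction, so the proof is unaffected. (If instead you read the hypothesis as ``$P_e\le o(p^{\delta M+1})$ against every admissible adversary,'' then applying it to the do-nothing adversary gives the claim directly and your Step~1 bootstrap becomes unnecessary.)
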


\begin{theorem}
    Let $c = \log \frac{1}{1-R_0-\delta}$ with $\delta$ being sufficiently small such that $\delta < ce^{-c}$. Under Assumptions \ref{def:adversary1} and \ref{def:adversary3} (weak adversary), any decoder with error probability satisfying $P_e \le o(2^{-M}p^{\delta M+1})$ must require at least $\overline{N} \ge  (c-o(1))M$ reads in expectation when the message is chosen uniformly at random.
    \label{thm:converse2}
\end{theorem}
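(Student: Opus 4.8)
The plan is to reduce Theorem~\ref{thm:converse2} to the strong‑adversary converse of Theorem~\ref{thm:converse}, paying only the extra factor $2^{-M}$. First recall the structure of the latter: under Assumptions~\ref{def:adversary1} and \ref{def:adversary2} one fixes a constant $\epsilon>0$, supposes for contradiction that $\overline{N}<(c-\epsilon)M$, and exhibits a ``confusion event'' for the strong adversary --- an event $\mathcal{G}=\mathcal{G}(m,\omega)$ depending on the message $m$ and on the message‑ and adversary‑independent randomness $\omega$ (the sequence of sampled molecule‑slots together with the sequence of sequencing‑error indicators), of probability at least $p^{(\delta+o(1))M}$, on which the strong adversary can select a subset $D=D(m,\omega)\subseteq\{1,\dots,M\}$ of slots together with a second valid message $m'$ agreeing with $m$ off $D$, such that redirecting every error‑read that samples a slot of $D$ to the corresponding molecule of $m'$ (and leaving all other reads untouched) makes the inner‑decoded transcript, \emph{up to the decoder's stopping time}, indistinguishable from the one $m'$ would have produced. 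Since the decoder is deterministic it then errs on at least one message of the pair $(m,m')$, and an averaging over such confusable pairs upgrades this to $P_e\ge p^{(\delta+o(1))M}$, contradicting $P_e\le o(p^{\delta M+1})$ and forcing $\overline{N}\ge(c-o(1))M$ as $\epsilon\downarrow0$.

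For the weak adversary I would run exactly this argument with the strong adversary replaced by the following randomized strategy, which inspects neither the decoder nor the future. Before the first read the adversary draws a uniformly random subset $\widehat{D}\subseteq\{1,\dots,M\}$ (equivalently, $M$ independent fair bits), and for this $\widehat D$ fixes a canonical (say, lexicographically first) message differing from $m$ only within $\widehat{D}$, if one exists. Whenever a sequencing error occurs on a read sampling slot $j$, the adversary outputs that canonical message's slot‑$j$ molecule if $j\in\widehat{D}$, and otherwise outputs the true slot‑$j$ molecule of $m$; this depends only on the known codebooks, on $m$, on the currently sampled slot, and on the pre‑drawn bits, so it is admissible under Assumption~\ref{def:adversary3}. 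Because $\widehat{D}\perp\omega$, for every message $m$ and every decoder we have $\p\bigl(\widehat{D}=D(m,\omega)\ \text{and}\ \mathcal{G}(m,\omega)\ \text{occurs}\bigr)=2^{-M}\,\p\bigl(\mathcal{G}(m,\omega)\bigr)\ge 2^{-M}p^{(\delta+o(1))M}$, and on this event the weak adversary corrupts exactly the same slots, towards exactly the same message $m'$, as the strong adversary would --- so the two runs produce pathwise the same transcript and the same (random) stopping time, and the decoder errs on at least one of $m,m'$ as before. Re‑running the same averaging over confusable pairs used in Theorem~\ref{thm:converse} then gives $P_e\ge 2^{-M}p^{(\delta+o(1))M}$, contradicting $P_e\le o(2^{-M}p^{\delta M+1})$ and yielding $\overline{N}\ge(c-o(1))M$ after $\epsilon\downarrow0$.

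The main obstacle is to verify, from the internals of the proof of Theorem~\ref{thm:converse}, that the strong adversary's extra power is genuinely used \emph{only} to make an $M$‑bit choice --- which slots to corrupt --- as a function of the realized sampling and error patterns, and that, once this choice is guessed correctly, the rest of the strategy is the purely memoryless, slot‑wise redirection above (no use of knowing which future reads will be erroneous, which slots will be sampled, or the decoder's stopping rule). The delicate point is the claim that, on $\mathcal{G}$, this memoryless redirection already yields a transcript consistent up to the stopping time; but that is precisely what $\mathcal{G}$ is engineered to guarantee, and the pathwise coupling with the strong‑adversary run (valid since, given $\omega$ and $\{\widehat{D}=D(m,\omega)\}$, the two runs are identical) lets us inherit it. Two minor points remain to check: that the canonical confusing message exists whenever $\widehat D$ matches the strong adversary's subset (this is part of $\mathcal{G}$, hence already supplied by the counting in Theorem~\ref{thm:converse}), and that Assumption~\ref{def:adversary3} is read so that the adversary supplying a read's corruption may see which molecule it is corrupting --- the only sensible reading, and in any case the bits $(\widehat D_j)$ are pre‑drawn so that no information beyond the currently sampled slot is used.
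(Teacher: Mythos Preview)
There is a genuine gap. In the paper's strong-adversary proof (Lemma~\ref{lem:main}), the confusing message $m'$ is not canonical: it is selected to lie in the decoder-dependent set $U_1=U_1(f)$, namely the messages for which the deterministic decoder, given index sequence $f$ and no sequencing errors, stops by time $c''M$ \emph{and} outputs correctly. This is precisely where the strong adversary's knowledge of the decoder (the first bullet of Assumption~\ref{def:adversary2}) is used, in addition to the $M$-bit ``which slots'' choice you identify. Your weak adversary guesses only the slot set $\widehat D$ and then takes the lexicographically first $m'$ agreeing with $m$ off $\widehat D$; even when $\widehat D$ matches the strong adversary's slot set, there is no reason this canonical $m'$ coincides with the strong adversary's $m'\in U_1$. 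If the canonical $m'\notin U$, the decoder need not stop by time $c''M$, and beyond $c''M$ reads sampling slots in $\widehat D$ need not carry sequencing errors, so the transcript no longer matches $m'$'s and the conclusion ``errs on at least one of $m,m'$'' fails. If the canonical $m'\in U_2$, the decoder's output on $m'$'s transcript could well be $m$, again producing no error for the true message $m$. Your pair-symmetrization does not rescue this: when the true message is $m'$, your adversary's canonical target relative to $m'$ is generally not $m$, so the two runs are not exchangeable.

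The paper closes this gap by \emph{randomizing} $m'$ rather than taking a canonical one: the weak adversary draws $\mathcal I$ uniformly among size-$R'M$ index sets and then draws $m'$ uniformly among messages with $A_{m'}|_{\mathcal I}=A_m|_{\mathcal I}$. A short counting argument (Lemma~\ref{lem:possible_m'}) shows that this makes $m'$ uniform over all non-unique messages, so $\p(m'\in U_1(f)\mid f,\mathcal I)\ge |U_1(f)\setminus V(\mathcal I)|/\exp(R_0 M(\alpha-1)\log M)$, which is then combined with the Markov-inequality control on $\p(|\Lambdatil|\le c''M)$ exactly as in the strong-adversary proof. The $2^{-M}$ factor arises from the probability that the distinct slot values sampled in $T_2(f)$ are contained in $\mathcal I$. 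To repair your argument you would need to replace the canonical $m'$ by such a random one and carry out this counting; guessing an $M$-bit slot set alone is not enough.
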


We now proceed to compare the achievability and converse.

\subsection{Discussion and Numerical Evaluation}

We first note that the statements $P_e \le p^{(\delta + o(1)) M}$, $P_e \le o(p^{\delta M+1})$, and $P_e \le o(2^{-M}p^{\delta M+1})$, while slightly different, all have right-hand sides behaving as $p^{(\delta + o(1)) M} = \exp\big( -\big(\delta M \log \frac{1}{p}\big)(1+o(1)) \big)$ when $p = o(1)$, and thus all amount to having the same error exponent (i.e., the same value of $\lim_{M \to \infty} -\frac{1}{M\log(1/p))} \log P_e$, which in this case equals $\delta$).

Recall that error exponents for concatenated codes with a fixed number of reads were studied in \cite{Weinberger,ling2024exact}, with the achievability parts using index-based codes and only assuming that $p = o(1)$.  
One might be tempted to compare those exponents with Theorem \ref{thm:achievability}, but in fact, we can make a stronger claim than a higher exponent: We also improve the \emph{scaling law} in the exponent, reducing the error probability from $e^{-\Theta(M)}$ to $e^{-\Theta(M\log(1/p))}$.  Thus, using a variable number of reads can provably allow considerably smaller error probability.  As discussed in the introduction, the intuition is that being variable-length allows us to avoid certain (dominant) ``bad events'' where too few distinct molecules are sampled.

Next, we compare our achievability and converse bounds.  As discussed above, the error exponents themselves are identical, but with the important difference that the converse is restricted to $\delta < ce^{-c}$.  This means that we have matching achievability and converse bounds, but only for a limited range of parameters.  As we see in Figure \ref{fig:examples}, the range where the converse holds is limited but still reasonable, and corresponds to regimes where the average number of reads is not too high.

Finally, observe that $P_e \le p^{(\delta + o(1)) M}$ implies $P_e \to 0$ whenever $\delta > 0$.  Thus, we attain a positive error exponent in Theorem \ref{thm:achievability} whenever $c > \log\frac{1}{1-R_0}$, or equivalently $R_0 < 1-e^{-c}$.  Combining this with the definition of $R_0$ in \eqref{eq:rbar_def}, and choosing $R_{\rm in}$ to be arbitrarily close to the sequencing channel capacity $C_{\rm in}$, we find that we attain a positive error exponent for all rates satisfying
\begin{equation}
    R < (1-e^{-c}) \Big(C_{\rm in} - \frac{1}{\beta}\Big). \label{eq:rate_achieved}
\end{equation}
This matches the set of rates with a positive error exponent in the related prior works studying the fixed-$N$ setting \cite{Weinberger,ling2024exact}, so our improvements are only in the decay rate of $P_e$ and not in the set of rates for which $P_e \to 0$.

    \begin{figure}
        \begin{centering}
            \includegraphics[width=0.6\columnwidth]{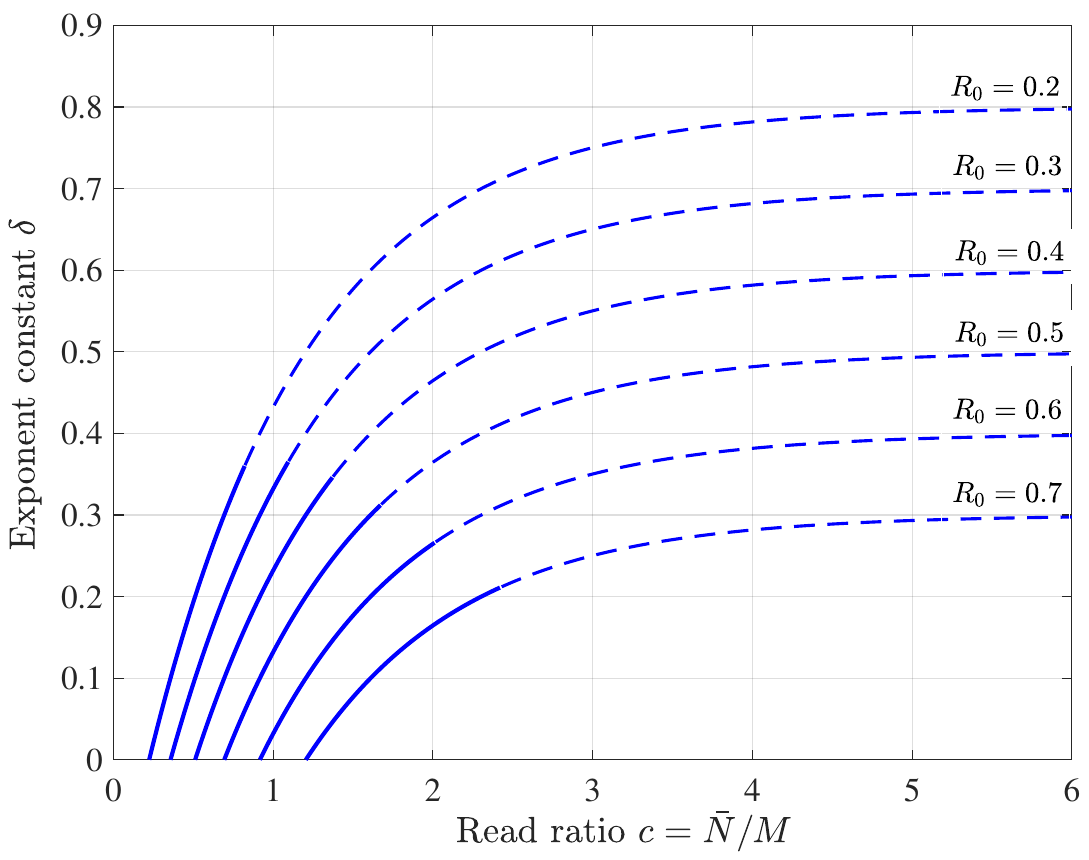} 
            \par
        \end{centering}
        
        \caption{Achievable error exponents for various $R_0$ values (dashed), and a matching converse in the low-$\delta$ regime (solid).} \label{fig:examples}
    \end{figure}

\section{Achievability Proof} \label{sec:pf_ach}

In this section, we prove Theorem \ref{thm:achievability}.  We first describe the codebook choice and the decoding rule, the latter of which makes use of the parameter $\delta \in (0,1-R_0)$ from the theorem statement:
\begin{itemize}
    \item \underline{Codebook}: We choose an index-based codebook $\{A_m\}$ such that, for any $\epsilon > 0$, when $M$ is sufficiently large, it holds that 
    \begin{equation}
        |A_i \cap A_j| < (R_0 + \epsilon) M, \quad \forall i \ne j, \label{eq:intersect_cond}
    \end{equation}
    with $R_0$ being as in \eqref{eq:rbar_def}.  The existence of such a codebook was proved in \cite[Lemma 7]{ling2024exact} using a simple greedy construction analogous to the classical Gilbert-Varshamov construction (e.g., see \cite[Ex.~5.19]{gallager1968information}).
    \item \underline{Decoding}: We say that a codeword $A_i$ is \emph{$\delta$-consistent} with the decoded molecules if there are at most $\delta M$ decoded molecules not in $A_i$.  The decoding rule is then as follows: \emph{Continue sampling until exactly one codeword $A_i$ is $\delta$-consistent, and declare that the message is $i$.}  (If a point is ever reached where none of the $A_i$ are $\delta$-consistent, the decoder stops and declares failure.)
\end{itemize}
The following lemma characterizes the average number of samples taken.

\begin{lemma} \label{lem:ach_time}
    Under the preceding codebook and decoding rule, with sequencing error probability $p = o(1)$, the expected number of reads is at most $(\log\frac{1}{1-R_0-\delta} + o(1))\cdot M$.
\end{lemma}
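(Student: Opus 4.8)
The plan is to bound the expected number of reads by analyzing, as a function of the number of reads $n$ already performed, the probability that the decoder has not yet stopped, and then summing over $n$ (since $\overline N = \sum_{n \ge 0} \p(N > n)$). The decoder stops as soon as exactly one codeword is $\delta$-consistent with the decoded molecules. The true codeword $A_m$ has at most (number of sequencing errors so far) decoded molecules outside it, so for moderate $n$ it is $\delta$-consistent with overwhelming probability; the issue is ruling out the possibility that a \emph{second} codeword $A_j$, $j \ne m$, is also $\delta$-consistent, which is what would force the decoder to keep reading. So the heart of the argument is: once we have sampled enough \emph{distinct} molecules of $A_m$, no wrong codeword can be $\delta$-consistent (because $|A_m \cap A_j| < (R_0 + \epsilon)M$ by \eqref{eq:intersect_cond}, so a wrong codeword is already missing $\approx (1 - R_0 - \epsilon)M$ of $A_m$'s molecules, far more than $\delta M$ once $\delta < 1 - R_0$, provided those missing molecules have actually been observed).

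Concretely, I would first reduce to the noiseless coupon-collector picture: let $D_n$ be the number of distinct molecules of $A_m$ sampled in the first $n$ reads (ignoring sequencing errors, which only help wrong codewords look \emph{less} consistent is false — errors hurt, so I need to handle them). Let me instead track $U_n$ = number of molecules of $A_m$ \emph{not} yet sampled correctly. If a wrong codeword $A_j$ is to be $\delta$-consistent, then at most $\delta M$ decoded molecules lie outside $A_j$; but every correctly-decoded molecule of $A_m$ that lies outside $A_j$ counts toward that budget, and there are at least $(|A_m| - |A_m \cap A_j|) - (\text{number of }A_m\text{-molecules not yet correctly sampled}) \ge (1 - R_0 - \epsilon)M - U_n$ such molecules. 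Hence if $U_n < (1 - R_0 - \epsilon - \delta)M$, then no wrong codeword is $\delta$-consistent, and (since $A_m$ itself is $\delta$-consistent whenever fewer than $\delta M$ sequencing errors have occurred) the decoder has stopped. Choosing $\epsilon$ small, the threshold is essentially $(1 - R_0 - \delta)M$. Therefore
\begin{equation}
\p(N > n) \le \p\big(U_n \ge (1 - R_0 - \delta - \epsilon)M\big) + \p(\text{more than } \delta M \text{ errors in } n \text{ reads}).
\end{equation}
The second term decays geometrically in $n$ for $n$ not too large and, for $n$ large, is dominated by $p^{\Omega(\delta M)}$-type bounds; with $p = o(1)$ its contribution to $\sum_n \p(N>n)$ is $o(M)$, so it is negligible. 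For the first term, $U_n$ is stochastically dominated by the number of uncollected coupons after $n$ draws from $M$ coupons (one draw per read, since even an erroneous read still samples a molecule of $A_m$, and I only need a \emph{correct} sample of each molecule — so actually $U_n$ is dominated by uncollected coupons after $\mathrm{Bin}(n, 1-p)$ effective draws; the $1-p$ factor is absorbed into $o(1)$).

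The remaining computation is the standard coupon-collector tail: $\mathbb{E}[U_n] = M(1 - 1/M)^n \approx M e^{-n/M}$, and $U_n$ concentrates (e.g. via a Chernoff/McDiarmid bound, since $U_n$ is a sum of indicators with bounded differences), so $\p(U_n \ge (1 - R_0 - \delta - \epsilon)M)$ transitions sharply from near $1$ to near $0$ around $n^\star$ where $e^{-n^\star/M} = 1 - R_0 - \delta - \epsilon$, i.e. $n^\star = M \log \frac{1}{1 - R_0 - \delta - \epsilon}$. Summing $\p(N > n)$ over $n$ then gives $\overline N \le n^\star + o(M) = \big(\log\frac{1}{1-R_0-\delta} + o(1)\big) M$ after letting $\epsilon \to 0$ slowly with $M$; the exponentially small tail above $n^\star$ contributes only $o(M)$.

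The main obstacle I anticipate is the bookkeeping around sequencing errors: I must make sure that (i) the true codeword $A_m$ remains $\delta$-consistent with high probability despite errors (which needs the number of errors to stay below $\delta M$, true with high probability for the relevant range of $n$, and handled by the geometric tail for larger $n$), and (ii) the erroneously-decoded molecules do not somehow make a wrong codeword $A_j$ \emph{more} consistent — but they cannot, since errors only add decoded molecules, never remove the correctly-decoded ones of $A_m$ that lie outside $A_j$, so the lower bound $(1-R_0-\epsilon)M - U_n$ on $A_j$'s inconsistency count is unaffected. Making this monotonicity argument airtight, and correctly coupling $U_n$ to a clean coupon-collector process so that standard concentration applies, is where the care is needed; the coupon-collector tail bound itself is routine.
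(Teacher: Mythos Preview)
Your proposal is correct and rests on the same key observation as the paper: once at least $(R_0+\epsilon+\delta)M$ distinct molecules of $A_m$ have been correctly decoded (equivalently, your $U_n < (1-R_0-\epsilon-\delta)M$), every wrong codeword $A_j$ has more than $\delta M$ decoded molecules outside it by \eqref{eq:intersect_cond}, so the decoder must have stopped. Where you diverge is in the bookkeeping. You bound $\overline{N}=\sum_{n\ge 0}\p(N>n)$ via concentration of $U_n$ and a sharp-threshold argument at $n^\star$; the paper instead observes that $N$ is dominated by the waiting time $\tau$ to collect $(R_0+\epsilon+\delta)M$ distinct correctly-decoded molecules of $A_m$, and computes $\mathbb{E}[\tau]$ directly as a sum of geometric means $\sum_{k}\frac{1}{(1-p)(1-k/M)}\approx \frac{M}{1-p}\log\frac{1}{1-R_0-\epsilon-\delta}$. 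The paper's route is shorter and avoids two complications you flag: (i) your second term $\p(\text{more than }\delta M\text{ errors})$ is unnecessary, since the decoder also halts (declaring failure) when \emph{no} codeword is $\delta$-consistent, so you only need ``no wrong codeword is $\delta$-consistent,'' not ``$A_m$ is $\delta$-consistent''; (ii) your concentration step needs care for large $n$ (McDiarmid's bound $\exp(-2t^2/n)$ degrades as $n$ grows), whereas the geometric-sum computation sidesteps tail estimates entirely. Indeed, your tail sum $\sum_n \p(U_n \ge \text{threshold})$ is itself exactly $\mathbb{E}[\tau]$, so pushing your approach through cleanly would naturally collapse back to the paper's one-line coupon-collector calculation.
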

\begin{proof}
    Fix an arbitrarily small constant $\epsilon > 0$ and suppose that $M$ is large enough such that \eqref{eq:intersect_cond} holds.  
    We first observe that once we see at least $(R_0+\epsilon+\delta)M$ decoded molecules in $A_i$, the decoding process stops and outputs $i$.  This is because once this is true, we have from \eqref{eq:intersect_cond} that all of the other messages $j \ne i$ must have more than $\delta M$ decoded molecules outside $A_j$.

    Letting $m$ be the true message, it follows that the average stopping time is upper bounded by the average time to seeing $(R_0+\epsilon+\delta)M$ decoded molecules in $A_m$.  To characterize this, note that after seeing $k$ distinct decoded molecules from $A_m$, every new sample gives a probability at least $(1-p)(1-\frac{k}{M})$ of seeing the $(k+1)$-th one.  The average time for this to occur is at most $\frac{1}{(1-p)(1-\frac{k}{M})}$, since ${\rm Geometric}(q)$ has mean $\frac{1}{q}$.

    The expected time taken to see $(R_0+\epsilon+\delta)M$ molecules from $A_m$ is therefore upper bounded by
    \begin{equation}
        \sum_{k=0}^{(R_0+\epsilon+\delta)M} \frac{1}{(1-p)(1-\frac{k}{M})} = \frac{1}{1-p} M \bigg(\log \frac{1}{1-R_0-\epsilon-\delta} + o(1)\bigg).
    \end{equation}
    Since $\epsilon$ is arbitrarily small and $p = o(1)$, the conclusion follows.
\end{proof}

Next, we study the error probability.

\begin{lemma} \label{lem:ach_perr}
    Under the preceding codebook and decoding rule, and with sequencing error probability $p = o(1)$, the error probability behaves as $P_e \le p^{(\delta + o(1)) M}$.
\end{lemma}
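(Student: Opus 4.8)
The plan is to bound the error probability by analyzing when the decoding rule can output a wrong message, namely when some $A_j$ with $j \neq m$ becomes $\delta$-consistent before the true $A_m$ is uniquely identified. First I would observe that for the decoder to ever declare message $j \neq m$, at some read it must hold that at most $\delta M$ of the decoded molecules seen so far lie outside $A_j$. Since the true molecules sampled always come from $A_m$, and $|A_m \cap A_j| < (R_0 + \epsilon)M$ by \eqref{eq:intersect_cond}, any correctly-decoded molecule from $A_m \setminus A_j$ counts against $j$-consistency. The key point is that the number of such ``bad for $j$'' correctly-decoded molecules only grows over time, so once it exceeds $\delta M$ there is no way back; hence an erroneous output for $j$ requires that, at the stopping time, the set of decoded molecules outside $A_j$ has size at most $\delta M$. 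Because we continue sampling until \emph{exactly one} $A_i$ is $\delta$-consistent, the dangerous scenario is that $A_j$ becomes $\delta$-consistent while the correct $A_m$ is still consistent too (so the decoder is forced to continue), but then at some later point $A_m$ stops being the unique consistent one in favor of $A_j$ — but $A_m$ can only lose consistency via sequencing errors producing molecules outside $A_m$.

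The crux of the argument is therefore to count sequencing errors. I would argue: for $A_j$ (with $j \neq m$) to be $\delta$-consistent at any point, the decoded molecules outside $A_j$ number at most $\delta M$; but every distinct correctly-decoded molecule from $A_m$ that lies outside $A_j$ contributes to this count. The number of distinct molecules of $A_m$ that have been correctly decoded at the stopping time is large — indeed, the decoder only stops once enough evidence accumulates — so to keep $A_j$ consistent, a constant fraction of those ``outside-$A_j$'' molecules in $A_m$ must instead have been decoded \emph{incorrectly}, i.e., hit by sequencing errors. More precisely, I would show that an error event for $j$ forces at least roughly $(1 - R_0 - \delta - \epsilon)M$ of the reads of specific molecules (those in $A_m \setminus A_j$, which number at least $(1 - R_0 - \epsilon)M$ by \eqref{eq:intersect_cond} since $|A_m| = M$) to have been corrupted by sequencing errors before their first clean read. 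Each such corruption is an independent probability-$p$ event, and there are at least $(\delta - o(1))M$ of them that must \emph{all} fail in the right way; since each molecule is eventually sampled (as $\overline N < \infty$), the probability that a given molecule's ``decisive'' read is an error is at most... here one must be careful because the adversary and the variable stopping time interact.

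The cleanest route is: condition on the identity of the molecules and reduce to a union bound over the at most $\exp((\alpha - 1)R_0 M \log M)$ incorrect messages $j$ (from \eqref{eq:num_messages}), times the probability that message $j$ causes an error. For a fixed $j$, I would bound $\p(\text{error via } j)$ by the probability that at least $(1 - R_0 - \epsilon - \delta)M$ distinct molecules in $A_m \setminus A_j$ are ``never cleanly decoded before the stop.'' A molecule is cleanly decoded on a read iff no sequencing error occurs on that read; conditioned on the molecule being sampled $t$ times before the stop, the chance all $t$ reads are erroneous is $p^t \le p$. Summing/bounding appropriately, and using that these events across distinct molecules can be made independent of each other (the adversary's power and the stopping rule do not help it make \emph{more} molecules be corrupted on \emph{all} their reads, since that only requires independent coin flips), one gets $\p(\text{error via } j) \le \binom{M}{(1 - R_0 - \epsilon - \delta)M} p^{(1 - R_0 - \epsilon - \delta)M}$. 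Multiplying by the number of messages and taking logs, the polynomial and binomial factors are $\exp(O(M \log M)) = p^{-o(M)}$ since $\log\frac1p = \omega(1)$ is \emph{not} assumed — wait, $p = o(1)$ only gives $\log\frac1p \to \infty$, so $\exp(O(M\log M)) = \exp(o(M \log \frac 1p \cdot \frac{\log M}{\log(1/p)}))$; this requires $\log\frac1p$ to dominate $\log M$, which is \emph{not} guaranteed. \textbf{The main obstacle} is exactly this: controlling the combinatorial $\exp(\Theta(M\log M))$ overhead (number of messages times binomial coefficients) against $p^{\Theta(M)}$ when $p = o(1)$ may be only polynomially small in $M$. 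I expect the resolution is that $\delta M$ in the final exponent has slack: one shows $P_e \le p^{(\delta' + o(1))M} \cdot \exp(O(M \log M))$ with $\delta' > \delta$, and absorbs the $\exp(O(M\log M))$ into the $o(1)$ in the exponent of $p$ \emph{only} by noting that the claim $P_e \le p^{(\delta + o(1))M}$ is meant in the regime where $\log\frac1p$ is large relative to $\log M$ — or, more likely, the $o(1)$ in ``$p^{(\delta+o(1))M}$'' is interpreted so that the overhead, being $p$-independent, is swept into it via $\frac{\log(\exp(O(M\log M)))}{M \log(1/p)} = O\!\big(\frac{\log M}{\log(1/p)}\big)$, which the authors' convention treats as $o(1)$ under an implicit assumption like $\log\frac1p = \omega(\log M)$ or simply by the two-parameter limit ordering. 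I would flag this and structure the proof so that the dependence on $p$ is isolated cleanly as $p^{(1 - R_0 - \delta - \epsilon + o_M(1))M}$ and then invoke $R_0 + \epsilon$ arbitrarily close so the exponent of $p$ exceeds $\delta$, matching the claim after replacing $1 - R_0 - \delta$'s role — actually, re-examining, the decoder stops at $(R_0 + \epsilon + \delta)M$ clean molecules in $A_m$, and consistency of $A_j$ needs $\le \delta M$ decoded molecules outside $A_j$; since $|A_m \setminus A_j| \geq (1 - R_0 - \epsilon)M$ and clean decodes of those all count against $j$, an error needs $\geq (1 - R_0 - \epsilon)M - \delta M$ of them corrupted, giving exponent $(1 - R_0 - \delta - \epsilon)M$ on $p$ — but the theorem wants exponent $\delta M$, so in fact the binding constraint must be phrased via the \emph{complement}: one needs at least $\delta M$ \emph{bad} decoded molecules to kill $A_m$, hence $\geq \delta M$ sequencing errors landing outside $A_m$, yielding $p^{\delta M}$; I would develop the proof along this second line, which directly targets the $\delta M$ exponent, and handle the $\exp(O(M\log M))$ message-count overhead by the convention above.
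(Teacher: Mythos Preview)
Your second line of attack is the right one: an error requires $A_m$ to lose $\delta$-consistency, which needs at least $\delta M$ sequencing errors, and this observation alone already dispenses with any union bound over wrong messages $j$. But you then say you would still ``handle the $\exp(O(M\log M))$ message-count overhead by the convention above''---this is both unnecessary and fatal. It is unnecessary because once you argue via $A_m$ losing consistency, the identity of the incorrect $j$ is irrelevant and no union bound over $j$ is taken. It is fatal because, as you yourself flag, $p=o(1)$ does \emph{not} imply $\log\frac{1}{p}=\omega(\log M)$, so an $\exp(\Theta(M\log M))$ factor cannot be written as $p^{o(M)}$; the paper makes no such implicit assumption, and the lemma would be false at the stated generality if that overhead were present.

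The genuine missing ingredient is how to convert ``at least $\delta M$ sequencing errors before the decoder stops'' into a probability bound when the stopping time $N$ is random and unbounded (with many reads, $\delta M$ errors is not rare). The paper resolves this with a \emph{race} argument: the decoder certainly stops and outputs $m$ once it has seen $(R_0+\epsilon+\delta)M$ distinct molecules of $A_m$ decoded without sequencing error, since by \eqref{eq:intersect_cond} every $A_j$ with $j\neq m$ then has more than $\delta M$ decoded molecules outside it. Hence the error event is contained in ``the $\delta M$-th sequencing error arrives before the $((R_0+\epsilon+\delta)M)$-th new clean molecule.'' Encoding each sequencing error as a $0$ and each new clean molecule as a $1$, one checks that (given at most $(R_0+\epsilon+\delta)M$ clean molecules so far) each symbol is a $0$ with probability at most $\frac{p}{(1-p)(1-R_0-\epsilon-\delta)}$, and a union bound over at most $2^{(R_0+\epsilon+2\delta)M}$ such binary strings gives
\[
P_e \;\le\; 2^{(R_0+\epsilon+2\delta)M}\Big(\frac{p}{(1-p)(1-R_0-\epsilon-\delta)}\Big)^{\delta M}.
\]
The overhead here is only $\exp(O(M))$, which \emph{is} $p^{o(M)}$ whenever $p=o(1)$. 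This race framing---which simultaneously controls the number of trials and isolates the $p^{\delta M}$ factor---is the key step your proposal does not supply.
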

\begin{proof}
    Let $m$ be the true message.  In order for an error to occur upon stopping, it must be that $m$ itself is no longer $\delta$-consistent, which is only possible if there are at least $\delta M$ sequencing errors.  Moreover, as noted in the proof of Lemma \ref{lem:ach_time}, once we see at least $(R_0+\epsilon+\delta)M$ decoded molecules in $A_m$, the decoding process stops and outputs $m$.  Combining these two observations gives: \emph{If we see at least $(R_0+\epsilon+\delta)M$ distinct molecules without sequencing errors before incurring $\delta M$ sequencing errors, then the decoding decision is correct.}

    Suppose that at a certain point we have seen $k$ distinct decoded molecules (some of which may be from sequencing errors). The probability that the next molecule will be a new molecule without sequencing error is $(1-p)(1-\frac{k}{M})$, while the probability that the next sample will incur a sequencing error is $p$. Hence, the probability of incurring a sequencing error before seeing a new molecule without a sequencing error is given by $\frac{p}{p+(1-p)(1-\frac{k}{M})}$.  Observe that for $k \leq (R_0+\epsilon+\delta)M$, this probability is upper bounded as follows:
    \begin{equation}
        \frac{p}{p+(1-p)(1-\frac{k}{M})} \leq \frac{p}{(1-p)(1-\frac{k}{M})} \leq \frac{p}{(1-p)(1-R_0-\epsilon-\delta)}.
        \label{eq:new_molecule}
    \end{equation}
    To understand the implication of this finding on the error probability, we imagine a binary string being formed as we sample:
    \begin{itemize}
        \item Begin with the empty string.
        \item When we see a sequencing error, append a `0' to the string.
        \item When we see a new decoded molecule (i.e., one different from any seen before) without a sequencing error, append a `1' to the string.
        \item The remaining case is that we see an already-seen molecule without a sequencing error, in which case we do not extend the string.
        \item Stop when the string contains $\delta M$ 0s or $(R_0+\epsilon+\delta)M$ 1s.
    \end{itemize}
    Note that by construction, the stopping rule corresponds to having $\delta M$ sequencing errors, or having $(R_0+\epsilon+\delta)M$ distinct molecules without sequencing errors.
    
    By \eqref{eq:new_molecule}, the probability of seeing any specific string with $\delta M$ 0s is upper bounded by
    \begin{equation}
        \left(\frac{p}{(1-p)(1-R_0-\epsilon-\delta)}\right)^{\delta M}.
    \end{equation}
    Moreover, there are at most $2^{\delta M + (R_0+\epsilon+\delta)M}$ strings to consider, as any binary string with length exceeding $\delta M + (R_0+\epsilon+\delta)M$ must either contain at least $\delta M$ 0s or $(R_0+\epsilon+\delta)M$ 1s. 
    Hence, by the union bound, the probability that $\delta M$ 0s appear before $(R_0+\epsilon+\delta)M$ 1s is at most
    \begin{equation}
        2^{\delta M + (R_0+\epsilon+\delta)M} \cdot \left(\frac{p}{(1-p)(1-R_0-\epsilon-\delta)}\right)^{\delta M}.
    \end{equation}
    This simplifies to $p^{(\delta + o(1)) M}$ due to $\epsilon$ being arbitrarily small and $p = o(1)$.
\end{proof}

\section{Converse Proofs} \label{sec:pf_conv}

In this section, we prove Theorems \ref{thm:converse} (strong adversary) and \ref{thm:converse2} (weak adversary).  The structure of the section is as follows:
\begin{itemize}
    \item In Section \ref{sec:further}, we introduce some further notation and definitions.
    \item In Section \ref{sec:pf_stronger}, we state two main lemmas (Lemmas \ref{lem:not_in_s} and \ref{lem:main}) and show how they imply Theorem \ref{thm:converse}.
    \item In Sections \ref{sec:pf_not_in_s} and \ref{sec:pf_prob_f}, we prove the two main lemmas.
    \item In Section \ref{sec:pf_weaker}, we describe the changes in the analysis for proving Theorem \ref{thm:converse2}.
\end{itemize}

\subsection{Further Notation and Definitions} \label{sec:further}

Our analysis will center around the sequence of molecules that have been sampled (before sequencing errors) and observed (after sequencing errors) when the decoder stops and outputs a decision.  Accordingly, we provide the following definition.

\begin{dfn}
    We define the following sequences with (random) length $N$:
    \begin{itemize}
        \item $\Lambda = (\Lambda_1,\dotsc,\Lambda_N)$ denotes the sequence of molecules that have been sampled (before sequencing errors) when the decoder stops;
        \item $\Lambdatil  = (\Lambdatil_1,\dotsc,\Lambdatil_N)$ denotes the sequences of molecules that have been observed by the decoder (after sequencing errors) when the decoder stops.
    \end{itemize}
    The sequence length, written as $N = |\Lambdatil| = |\Lambda|$, is the \emph{stopping time} of the decoder.
\end{dfn}

We similarly write $\lambda$ and $\lambdatil$ (with $i$-th entries $\lambda_i$ and $\lambdatil_i$, and lengths $|\lambda|$ and $|\lambdatil|$) to represent \emph{generic} sequences of molecules, whose lengths need not equal $N$.  Note also that we can view $\Lambda$ as a length-$N$ truncation of an infinite sequence $(\Lambda_1,\Lambda_2,\dotsc)$ whose entries are drawn i.i.d.~via $\Lambda_i \sim {\rm Uniform}(A_m)$ conditioned on the message being $m$.  Recall that the strong adversary in Assumption \ref{def:adversary2} knows this entire infinite-length sequence.


We additionally introduce the following definition specific to index-based codes (see Definition \ref{dfn:index_based}).

\begin{dfn} \label{def:index_seq}
    The (random) \emph{index sequence} is a function $F : \mathbb{Z}^+ \rightarrow {1,2,\ldots, M}$, where $F(j) = i$ indicates that the $j$-th molecule sampled (before sequencing errors) is in $B_i$ (as specified in Definition \ref{dfn:index_based}).  We let $f$ denote a (fixed) realization of such a sequence.
\end{dfn}

Importantly, the sequence $F$ is independent of the message, since the sampling is uniform over the $M$ molecules, of which there is always one per $B_i$ for index-based codes.  That is, the entries of $F$ are drawn i.i.d.~from ${\rm Uniform}(\{1,\dotsc,M\})$.  
We also mention that while this definition is taken with respect to the molecules before sequencing errors, we will consider an adversary that never changes the index, and hence (for our adversary) the index sequence remains identical after sequencing errors.

Observe also that conditioned on any index sequence $f$ and any message $m$, the infinite-length sequence $(\Lambda_1,\Lambda_2,\dotsc)$ is uniquely determined, since the $j$-th molecule is determined by the $j$-th index and the unique molecule of codeword $m$ corresponding to that index.

The following definition provides a set of index sequences $f$ which will prove to be amenable to lower bounding the error probability (because $\delta M$ sequencing errors will suffice to confuse the decoder).

\begin{dfn} \label{def:setS}
    Fix $c'',R' > 0$, as well as $\delta \in (0,1-R_0)$ from the statements of Theorems \ref{thm:converse} and \ref{thm:converse2}.  We define $S(c'',R')$ to be the set of index sequences $f$ such that we can partition the time indices $\{1,2,\ldots, c''M\}$ into sets $T_1$ and $T_2$ such that:
    \begin{itemize}
        \item[(i)]$|T_1|\leq \delta M$;
        \item[(ii)] There are at most $R'M$ distinct values in $\{f(j)\}_{j \in T_2}$.
    \end{itemize}
\end{dfn}

For each message $m$, let $D_m$ be the decoding region of message $m$, i.e. $\lambdatil \in D_m$ if the decoder, upon receiving the sequence $\lambdatil$, declares that the message is $m$ and stops sampling.\footnote{Hence, if some $\lambdatil$ is such that the decoder doesn't stop, then it appears in none of the $D_m$ sets (or alternatively, we can view such $\lambdatil$ as being in a set $D_0$ representing the decision ``perform another read'').} Observe that we may assume the following:
\begin{equation}
    \text{The set }~\bigcup_{m = 1}^{\exp(RML)} D_m~\text{ is prefix-free} \label{eq:prefix_free}
\end{equation}
(i.e., if $\lambdatil \in D_m$ for some $m$, then no prefix of $\lambdatil$ is in any decoding region). This is because the sampling stops upon a decision to decode, and there is no notion of decoding again after that point.

Let 
\begin{equation}
    E_m = \bigcup_{m'\neq m} D_{m'} \label{eq:Em}
\end{equation}
be the error region, so that if the true message is $m$ but the decoder receives a sequence in $E_m$, the decoder decides on some $m'\neq m$ and there is an error. Moreover, define
\begin{equation}
    \Eprime_m = \{\lambdatil \,:\, \exists \lambda \in E_m ~~\text{s.t.}~~ d_H(\lambda,\lambdatil) \le \delta M \},
\end{equation}
where $d_H(\lambda,\lambdatil)$ is the number of entries in which $\lambda$ and $\lambdatil$ (with $|\lambda|=|\lambdatil|$) have different molecules. That is, for each $\lambdatil \in \Eprime_m$, there exists some $\lambda \in E_m$ of the same length as $\lambdatil$ that differs in at most $\delta M$ molecules.  

\subsection{Proof of Theorem \ref{thm:converse} (Converse Under a Strong Adversary)} \label{sec:pf_stronger}

Recall the set $S(c'',R')$ of index sequences defined in Definition \ref{def:setS}.  The idea of the proof is to show that $f \in S(c'',R')$ with high probability, and then lower bound the conditional error probability given $f \in S(c'',R')$.  Formally, the main intermediate steps are summarized in the following two lemmas, in which we define
\begin{equation}
    c = \log\frac{1}{1-R_0-\delta}
\end{equation}
in accordance with the theorem statement (which also specifies $\delta \in (0,1-R_0)$), with $R_0$ being given in \eqref{eq:rbar_def}.

\begin{lemma}
    Given any $c'' \in (0,c)$, there exists $R' < R_0$ such that the index sequence $F$ lies in $S(c'',R')$ with probability $1-o(1)$.
    \label{lem:not_in_s}
\end{lemma}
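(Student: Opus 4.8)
We need to show that a random index sequence $F$, with i.i.d.\ $\mathrm{Uniform}(\{1,\dots,M\})$ entries, lies in $S(c'',R')$ with probability $1-o(1)$, for a suitable $R' < R_0$ depending on the given $c'' < c = \log\frac{1}{1-R_0-\delta}$. Recall that membership in $S(c'',R')$ requires partitioning the time indices $\{1,\dots,c''M\}$ into $T_1$ (with $|T_1|\le\delta M$) and $T_2$ (with at most $R'M$ distinct values among $\{f(j)\}_{j\in T_2}$). The natural strategy is a greedy/coupon-collector argument: imagine processing the $c''M$ samples in order, maintaining a running set of ``already-seen'' indices; whenever a sample hits a new index, we have a choice of whether to ``accept'' it (put that index into the pool and the time index into $T_2$) or ``reject'' it (put the time index into $T_1$). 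If we accept the first $R'M$ distinct indices and reject all subsequent new ones, then $T_2$ has at most $R'M$ distinct values by construction, and $|T_1|$ equals the number of \emph{new}-index arrivals after the pool size has reached $R'M$. So the whole claim reduces to: with high probability, after the pool reaches size $R'M$, fewer than $\delta M$ of the remaining samples (out of the $c''M$ total) land on a never-before-seen index.

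**The core calculation.** Let me reparametrize in terms of expected waiting times. If the pool currently has $k$ distinct indices, a fresh sample is new with probability $1-\frac{k}{M}$, so the expected number of samples to grow the pool from $R'M$ to $R_0 M$ (say) is $\sum_{k=R'M}^{R_0 M}\frac{1}{1-k/M} = M\big(\log\frac{1}{1-R_0} - \log\frac{1}{1-R'}\big) + o(M)$. More to the point, the expected number of samples to collect \emph{all} distinct indices seen in the window $\{1,\dots,c''M\}$ behaves like a truncated coupon-collector process, and the expected \emph{count of new-index arrivals} among samples $\{1,\dots,c''M\}$ after the pool first reaches size $R'M$ is, in expectation, the pool-size increment, which is at most $(\text{final pool size}) - R'M$. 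The final pool size after $c''M$ samples concentrates around $(1-e^{-c''})M$ by standard occupancy/coupon-collector concentration (e.g.\ a Doob martingale / Azuma argument, since the final count of occupied bins is a $1$-Lipschitz function of the i.i.d.\ samples). Since $c'' < c$ gives $e^{-c''} > e^{-c} = 1-R_0-\delta$, i.e.\ $1-e^{-c''} < R_0 + \delta$, we conclude the final pool size is at most $(R_0+\delta-\gamma)M$ for some $\gamma>0$ with probability $1-o(1)$. Choosing $R' = R_0 - \gamma/2 < R_0$, the number of new-index arrivals after the pool reaches $R'M$ is at most $(R_0+\delta-\gamma)M - R'M = (\delta - \gamma/2)M < \delta M$, so $|T_1| \le \delta M$ as required.

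**Key steps in order.** (1) Formalize the greedy partition: define $T_2$ = time indices of the first $R'M$ distinct-index arrivals in $\{1,\dots,c''M\}$, $T_1$ = the rest. Note $T_2$ automatically has $\le R'M$ distinct values. (2) Observe $|T_1| = (\text{number of occupied bins after }c''M\text{ samples}) - \min(R'M,\ \#\text{distinct in full window})$ when the window already exceeds $R'M$ distinct values, and handle the easy edge case where fewer than $R'M$ distinct indices ever appear (then $T_1=\emptyset$). (3) Apply a concentration inequality to $Z_{c''M} := \#\{\text{occupied bins after }c''M\text{ i.i.d.\ uniform samples}\}$: $\mathbb{E}[Z_{c''M}] = M(1-(1-1/M)^{c''M}) = (1-e^{-c''}+o(1))M$, and by bounded-differences/Azuma, $Z_{c''M}$ concentrates within $o(M)$ of its mean w.h.p. (4) Use $c'' < c \iff 1-e^{-c''} < R_0+\delta$ to pick $\gamma > 0$ and then $R' = R_0 - \gamma/2$, and conclude $|T_1| \le Z_{c''M} - R'M \le (\delta - \gamma/2 + o(1))M \le \delta M$ w.h.p.

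**Main obstacle.** The routine parts are the expectation computation and the Azuma bound; the one spot that needs care is the logical bookkeeping in step (2) — correctly relating $|T_1|$ to the occupancy count when the greedy construction ``freezes'' the pool at $R'M$, and making sure the definition of $T_2$ as a set of \emph{time} indices (not values) cleanly satisfies condition (ii) of Definition~\ref{def:setS}. I expect the rest of the paper handles the edge case (fewer than $R'M$ distinct values in the window, making $f$ trivially in $S$) in a line. The strict inequality $c'' < c$ is exactly what creates the slack $\gamma$, so no delicate limiting argument is needed; everything is a fixed-$\gamma$ high-probability statement.
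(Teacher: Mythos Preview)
Your argument has a genuine gap, and it is not just bookkeeping. The claim in step (2) that $|T_1| = Z_{c''M} - R'M$ is incorrect for the greedy construction you describe. If you ``reject'' a new index $i$ (putting its first occurrence into $T_1$) and $i$ later reappears, you must decide where that repeat goes: if it goes to $T_2$ then $T_2$ acquires more than $R'M$ distinct values (violating condition (ii)); if it goes to $T_1$ then $|T_1|$ counts \emph{all} occurrences of rejected indices, not just first occurrences, and is strictly larger than $Z_{c''M}-R'M$. In fact, the equality $|T_1| = Z_{c''M}-R'M$ holds only when every rejected index is a singleton, which you never verify.

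The paper's proof handles precisely this point, and it is where the hypothesis $\delta < ce^{-c}$ from Theorem~\ref{thm:converse} enters. The paper puts into $T_1$ the time indices of (up to $\delta M$) \emph{singleton} values, and shows that the number of distinct values remaining in $T_2$ is $Z-\min(Z_1,\delta M)$, where $Z_1$ is the number of singletons. Two concentration results are then needed: $Z \approx (1-e^{-c''})M$ (which you have) \emph{and} $Z-Z_1 \approx (1-e^{-c''}-c''e^{-c''})M$. The choice of $R'<R_0$ must satisfy both $R'+\delta>1-e^{-c''}$ and $R'>1-e^{-c''}-c''e^{-c''}$, and the second inequality is compatible with $R'<R_0$ only because $\delta<ce^{-c}$ (via the monotonicity of $(1+z)e^{-z}$). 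Your argument uses only the first inequality and never invokes $\delta<ce^{-c}$; but the lemma is actually \emph{false} without that hypothesis (take e.g.\ $R_0=0.1$, $\delta=0.85$, $c''=2$, where no $R'<R_0$ works), so any proof that does not use it cannot be right.
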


\begin{lemma}
There exists an adversary satisfying Assumptions \ref{def:adversary1} and \ref{def:adversary2} such that the following holds: 
Suppose that the message $m$ is chosen uniformly at random, and let $f$ be an index sequence in $S(c'',R')$ for some $c'' < c$ and $R' < R_0$. Conditioned on the index sequence $f$ occurring, we have
\begin{equation}
    \p(\Lambdatil \in E_m | f) \geq p^{\delta M+1} \Big(\p(|\Lambdatil| \leq c''M|f) - p- \exp(-(R_0-R')(\alpha-1)M\log M)\Big).
\end{equation}
\label{lem:main}
\end{lemma}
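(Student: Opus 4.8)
The goal is to construct a single adversary (satisfying Assumptions \ref{def:adversary1} and \ref{def:adversary2}) and lower bound $\p(\Lambdatil \in E_m \mid f)$ for index sequences $f \in S(c'',R')$. The starting observation is the following: if the decoder stops by time $c''M$ (i.e.\ $|\Lambdatil| \le c''M$), and if the (untampered) observed sequence $\Lambda$ happens to lie in $E'_m$, then by definition of $E'_m$ there is a sequence $\lambda \in E_m$ of the same length differing from $\Lambda$ in at most $\delta M$ positions; the strong adversary — who sees the entire future sampling sequence and the future error pattern — can therefore attempt to ``push'' $\Lambda$ onto $\lambda$ by using the sequencing errors. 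So the plan is: (1) argue that under the proposed adversary, conditioned on $f$ and on $|\Lambda| \le c''M$ and on $\Lambda \in E'_m$ (suitably interpreted), the decoder is driven into $E_m$ whenever the error pattern ``cooperates''; (2) show the error pattern cooperates with probability at least $p^{\delta M}$ (roughly, we need $\le \delta M$ specified reads to be erroneous); (3) show that $\Lambda \in E'_m$ with probability close to $\p(|\Lambda| \le c''M \mid f)$, using that $f \in S(c'',R')$ forces few distinct indices among most of the first $c''M$ reads, so that $\delta M$ tampered reads suffice to make the observed sequence indistinguishable from one consistent with some other message $m'$; (4) collect the error terms $p$ (e.g.\ the event that the very first reads are already erroneous, or a ``reserve'' read) and $\exp(-(R_0-R')(\alpha-1)M\log M)$ (the probability that there is \emph{no} competing message $m'$ whose codeword $A_{m'}$ agrees with the $\le R'M$ distinct sampled indices — this is where \eqref{eq:num_messages} enters, since there are $\exp((\alpha-1)R_0 M\log M)$ messages and each ``bad'' configuration of $R'M$ indices rules out only $\exp((\alpha-1)R'M\log M)$ of them).

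Concretely, I would define the adversary as follows. Conditioned on the message $m$, the index sequence $f$, and the infinite future, the adversary examines whether $\Lambda$ (the infinite sampled sequence, truncated at the decoder's stopping time $N$) can be steered into $E_m$: since $f\in S(c'',R')$, write $\{1,\dots,c''M\} = T_1 \cup T_2$ with $|T_1|\le \delta M$ and $\{f(j)\}_{j\in T_2}$ taking at most $R'M$ distinct values. Fix a ``target'' message $m' \ne m$ whose codeword agrees with $A_m$ on all indices appearing in $T_2$ (this $m'$ exists with probability $1 - \exp(-(R_0-R')(\alpha-1)M\log M)$ by a counting/union-bound argument over the $\le R'M$ distinct indices in $T_2$); the adversary then, whenever a sequencing error occurs on a read whose index lies in $T_1$, replaces the molecule by the $m'$-codeword molecule for that index, and otherwise leaves it alone (or acts arbitrarily). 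The effect is that if \emph{every} read in $T_1$ that occurs before the decoder stops is erroneous, then the observed sequence $\Lambdatil$ is exactly what would have been produced had the true message been $m'$ with no errors on those reads — hence, by the prefix-free property \eqref{eq:prefix_free} and the fact that the decoder behaves identically on identical inputs, if the decoder would have stopped and decoded $m'$ under that scenario it does so here, giving $\Lambdatil \in D_{m'}\subseteq E_m$. The event ``every read in $T_1$ occurring before time $N$ is erroneous'' has probability at least $p^{|T_1|} \ge p^{\delta M}$ conditioned on $f$ and on the decoding path; the extra factor $p$ (yielding $p^{\delta M+1}$) and the subtracted $p$ come from handling the boundary case where the decoder stops before time $c''M$ but $N$ itself is small, or where an uncontrolled early error occurs — I would absorb these into a single clean bound.

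The main obstacle — and the step I expect to require the most care — is step (3)/(4) combined: rigorously establishing that $\Lambda$ (truncated at $N$, on the event $N \le c''M$) lies in $E'_m$ with probability at least $\p(N\le c''M\mid f) - p - \exp(-(R_0-R')(\alpha-1)M\log M)$. The subtlety is that $E'_m$ is defined in terms of the decoding regions $E_m = \bigcup_{m'\ne m} D_{m'}$, which are abstract; one must show that the ``$m'$-rewritten'' version of $\Lambda$ genuinely lands in some $D_{m'}$, and this is not automatic — it requires that when the decoder receives the fully-$m'$-consistent sequence it actually \emph{decides} (rather than continuing to sample forever), which one gets from $\overline N < \infty$ applied to message $m'$, together with a coupling between the decoder's stopping behavior under $m$-with-adversary and under $m'$-without-errors on the same index sequence. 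Handling the event $\{N \le c''M\}$ correctly — i.e.\ the adversary only needs to corrupt the reads in $T_1$ that actually occur before the (random) stopping time, and the stopping time under the rewriting equals the stopping time under $m'$ — is where the strong adversary's knowledge of the future is essential, and getting the conditioning and the prefix-free argument exactly right is the crux. The remaining pieces (the geometric-type bound $p^{\delta M}$ for the cooperating error pattern, and the counting bound for the existence of $m'$) are routine given \eqref{eq:num_messages} and Definition \ref{def:setS}.
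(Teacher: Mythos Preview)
Your overall architecture matches the paper's: exploit the $T_1/T_2$ partition of $\{1,\dots,c''M\}$ coming from $f\in S(c'',R')$, pick a ``target'' message $m'$ that agrees with $m$ on the indices appearing in $T_2$, have the adversary overwrite the reads in $T_1$ with the $m'$-molecules whenever those reads are all erroneous, and bound away the event that no suitable $m'$ exists by a counting argument yielding $\exp(-(R_0-R')(\alpha-1)M\log M)$.

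There is, however, a real gap in your handling of the stopping time. You propose to get the decoder to stop (under the rewritten sequence) from $\overline N<\infty$, but $\overline N<\infty$ only says the decoder eventually halts for message $m'$; it does not say it halts by time $c''M$. If the no-error stopping time $t_{m'}$ exceeds $c''M$, then corrupting only $T_1\subseteq\{1,\dots,c''M\}$ is insufficient: beyond time $c''M$ the sampled indices may be ones on which $A_m$ and $A_{m'}$ disagree, and you have no further error budget to fix them. Your counting argument produces some $m'$ matching $m$ on $T_2$, but not one that also has $t_{m'}\le c''M$ and decodes to $m'$. There is also a circularity you have not broken: the quantity $\p(|\Lambdatil|\le c''M\mid f)$ on the right-hand side of the lemma is the stopping time \emph{under the adversary you are constructing}, so you cannot first condition on $\{|\Lambdatil|\le c''M\}$ and then design the adversary to exploit that event.

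The paper resolves both issues with two linked devices you are missing. First, the adversary tosses an independent $\Psi\sim\mathrm{Bernoulli}(p)$ and is active only when $\Psi=1$; this is precisely where the extra factor $p$ (giving $p^{\delta M+1}$) and the subtracted $p$ come from, not from ``boundary cases'' as you suggest. On $\{\Psi=0\}$ the adversary does nothing, so $|\Lambdatil|$ equals the no-error stopping time $t_m$, and one obtains $\p(|\Lambdatil|\le c''M\mid f)\le p+\p(m\in U)$ for $U=\{m:t_m\le c''M\}$, cleanly decoupling the target probability from the adversary's actions. Second, the target $m'$ is chosen not from all messages matching on $T_2$ but from $U_1=\{m\in U:\text{decoder outputs }m\text{ with no errors}\}$; the pigeonhole bound (at most $\exp(R'(\alpha-1)M\log M)$ distinct $\lambda^{(\cdot)}|_{T_2}$ values) is applied \emph{within} $U_1$, so any $m\in U_1$ outside the small ``unique'' set $V$ admits a partner $m'\in U_1$ that automatically satisfies $t_{m'}\le c''M$ and decodes to $m'$. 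Messages in $U_2=U\setminus U_1$ already err on $\{\Psi=0\}$ and are accounted for separately. Finally, your use of $E'_m$ as an intermediate object is a detour; the paper's proof of this lemma never invokes it.
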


Note that here and subsequently, we write ``$f$'' in probabilities as a shorthand for the event ``$F=f$'', and similarly for other quantities such as $m$.   
We defer the proofs of these lemmas to subsequent subsections, and first show how they imply Theorem \ref{thm:converse}.

\begin{proof}[Proof of Theorem \ref{thm:converse}]
Consider a strategy that samples at most $c'M$ molecules in expectation for some $c' < c$, i.e., $\overline{N} = \mathbb{E}(|\Lambdatil|) \le c' M$.  Let $c''$ be between $c'$ and $c$. By Markov's inequality, we have
\begin{equation}
    \p(|\Lambdatil|\leq c''M) \geq 1 - \frac{\mathbb{E}(|\Lambdatil|)}{c''M} \geq 1-\frac{c'}{c''}. \label{eq:apply_markov}
\end{equation}
Moreover, we have the following lower bound on the error probability:
\begin{align}
    &\p(\Lambdatil \in E_m) \nonumber \\
    & \geq \p(\Lambdatil \in E_m \land F \in S(c'',R')) \\
    & = \sum_{f \in S(c'',R')} \p(f) \cdot \p(\Lambdatil \in E_m|f )\\
    & \geq \sum_{f \in S(c'',R')} \p(f) \cdot p^{\delta M+1} \Big(\p(|\Lambdatil| \leq c''M|f) -p - \exp((R'-R_0)(\alpha-1)M\log M)\Big) \label{eq:lb_Em_step3}\\
    & \geq  p^{\delta M+1} \Big(\p(|\Lambdatil| \leq c''M \land F \in S(c'',R')) - o(1)\Big), \label{eq:lb_Em_final}
\end{align}
where \eqref{eq:lb_Em_step3} follows from Lemma \ref{lem:main}, and \eqref{eq:lb_Em_final} applies $p = o(1)$, $R' < R_0$, and $\alpha > 1$.

Moreover, Lemma \ref{lem:not_in_s} states that $\p(F \notin S(c'', R')) = o(1)$, and combining this with \eqref{eq:apply_markov} gives
\begin{equation}
    \p(|\Lambdatil| \leq c''M \land F \in S(c'',R')) \geq \p(|\Lambdatil|\leq c''M) - \p(F \notin S(c'',R')) \ge \frac{c'}{c''} - o(1).
\end{equation}
Substituting into \eqref{eq:lb_Em_final}, we obtain 
\begin{equation}
     p^{\delta M+1} \Big( \p(|\Lambdatil| \leq c''M \land F \in S(c'',R')) - o(1)\Big) \geq p^{\delta M+1}(c'/c'' - o(1)) = \Omega(p^{\delta M+1}).
\end{equation}
Recalling that this lower bound is a consequence of $\overline{N} \le c' M$ for some $c' < c$, taking the contrapositive gives the statement of Theorem \ref{thm:converse}.

\end{proof}

\subsection{Proof of Lemma \ref{lem:not_in_s} (High Probability of $F \in S$)} \label{sec:pf_not_in_s}

Recall that $c = \log\frac{1}{1-R_0-\delta}$, and hence $R_0 = 1-\delta - e^{-c}$.  Since $c'' < c$ and $\delta < ce^{-c}$, and since the function $f(z) = ze^{-z} + e^{-z}$ is decreasing for $z \ge 0$, we conclude that there must exist some $R' < R_0$ satisfying the following:
\begin{gather}
    1-e^{-c''} < R' + \delta, \label{eq:c_ineq1} \\
    1-e^{-c''}-c''\cdot e^{-c''} < R'. \label{eq:c_ineq2}
\end{gather}
Recalling that $F$ is the random index sequence (Definition \ref{def:index_seq}), we define two useful random variables depending on $F$: 
\begin{itemize}
    \item Let $Z$ denote be the number of distinct values among $F(1), F(2),\ldots, F(c''M)$;
    \item Let $Z_1$ be the number of values that appear exactly once in the sequence $F(1), F(2),\ldots, F(c''M)$.
\end{itemize}
We claim that if $Z-\min(Z_1, \delta M) \leq R'M$, then $F \in S(c'', R'M)$ (defined in Definition \ref{def:setS}, which specifies two sets $T_1$ and $T_2$). To see this, let $T_1$ be any $\min(Z_1, \delta M)$ values of $j$ such that $F(j)$ appears exactly once in $F(1),F(2),\ldots, F(c''M)$; then, the number of distinct values in $\{F(i)\}_{i \in T_2}$ is $Z - |T_1| = Z - \min(Z_1, \delta M) \leq R'M$ as required in Definition \ref{def:setS}.
    
Recalling that each $F(j)$ is generated uniformly from $\{1,2,\ldots, M\}$, we can apply McDiarmid's inequality \cite{Bou13} on $Z$.  Specifically, when one of the $F(j)$ changes, the number of unique molecules changes by at most 1, so
\begin{equation}
    \mathbb{P}(|Z - \mathbb{E}(Z)| \geq \Omega(M)) \leq \exp(-\Omega(M)),
\end{equation}
which is exponentially small in $M$.

Have established concentration around $\mathbb{E}(Z)$, we now proceed to study $\mathbb{E}(Z)$ itself.  By linearity of expectation and the fact that each each read has a $1-\frac{1}{M}$ probability of failing to sample a given molecule, we have
    \begin{equation}
        \mathbb{E}(Z) = M(1-(1-1/M)^{c''M}) = (1-e^{-c''})M + o(M), \label{eq:E_X}
    \end{equation}
    and applying $1-e^{-c''} < R' + \delta$ from \eqref{eq:c_ineq1}, we obtain
    \begin{equation}
        \mathbb{P}(Z \geq (R'+\delta)M) \le \mathbb{P}(|Z- \mathbb{E}(Z)| \geq \Omega(M)) = \exp(-\Omega(M)), \label{eq:X_finding}
    \end{equation}
    which is exponentially small.  This shows that $Z- \delta M \leq R' M$ with probability $1-o(1)$.

    We use similar reasoning to show that $Z-Z_1 \leq R'M$ with probability $1-o(1)$. Observe that $Z-Z_1$ is equal to the number of $j$ such that $F(j)$ appears at least twice in $F(1), F(2),\ldots, F(c''M)$, and therefore changing one sample changes the output of $Z-Z_1$ by at most 1, so McDiarmid's inequality gives
    \begin{equation}
        \mathbb{P}(|(Z-Z_1)- \mathbb{E}(Z-Z_1)| \geq \Omega(M)) \leq \exp(-\Omega(M)). \label{eq:XY_conc}
    \end{equation}
    For each molecule, the number of times it appears follows a binomial distribution, and thus the probability of seeing a specific molecule exactly once is
    \begin{equation}
        (c''M)(1/M)(1-1/M)^{c''M-1} = c'' \cdot e^{-c''} + o(1),
    \end{equation}
    which implies
    \begin{equation}
        \mathbb{E}(Z_1) = M c'' \cdot e^{-c''} + o(M). \label{eq:E_Y}
    \end{equation}
    Combining \eqref{eq:E_X} and \eqref{eq:E_Y}, and applying $1-e^{-c''}-c''\cdot e^{-c''} < R'$ from \eqref{eq:c_ineq2}, it follows that
    \begin{equation}
        \mathbb{E}(Z-Z_1) \leq (1-e^{-c''}-c''\cdot e^{-c''})M + o(M) \leq R'M - \Omega(M),
    \end{equation}
    and combining this with \eqref{eq:XY_conc}, we obtain
    \begin{equation}
        \mathbb{P}(Z-Z_1 \geq R'M) \leq \exp(-\Omega(M)). \label{eq:XY_finding}
    \end{equation} 
Finally, combining the finding \eqref{eq:X_finding} on $Z$ and the finding \eqref{eq:XY_finding} on $Z-Z_1$, we conclude that, with probability $1-o(1)$, it holds that
\begin{equation}
    Z - \min(Z_1, \delta M) \leq R'M,
\end{equation}
and thus $F \in S(c'', R')$.

\subsection{Proof of Lemma \ref{lem:main} (Probability Bound Given $f \in S$)} \label{sec:pf_prob_f}

Recall that here we consider the strong adversary satisfying Assumptions \ref{def:adversary1} and \ref{def:adversary2}. 
We have $f \in S(c'', R')$ by assumption, and we let $T_1, T_2$ satisfy the corresponding conditions in Definition \ref{def:setS} (i.e., $|T_1| \le \delta M$, and there are at most $R'M$ distinct values of $\{f(j)\}_{j \in T_2}$).  Recall also that $F$ is independent of the message and the occurrence of sequencing errors, so conditioning on $F=f$ does not impact their distributions.

For each message $m$, let $\lambda^{(m)}$ be the hypothetical infinite-length sequence of molecules sampled (before sequencing errors) when the true message is $m$; note that given $f$, this $\lambda^{(m)}$ is deterministic.  Let $t_m$ be the corresponding stopping time for message $m$ when there are no sequencing errors, and consider the set
\begin{equation}
    U = \{ m : t_m \leq c''M\}. \label{eq:setU}
\end{equation}
It will be useful to distinguish whether the resulting decoder output at time $t_m$ is equal to $m$ or not. Accordingly, let $\Phi$ be the set of messages such that the decoder outputs $m$ when there are no sequencing errors.  We partition $U$ as follows:
\begin{gather}
    U_1 = U \cap \Phi, \label{eq:defU1} \\
    U_2 = U \setminus \Phi = U \setminus U_1. \label{eq:defU2}
\end{gather}
Moreover, we let $\lambda|_{T}$ denote the subsequence of $\lambda$ corresponding to indices in $T$ (in generic notation).  The adversary's strategy is as follows:
\begin{itemize}
    \item The adversary is only active when the following three conditions all hold:
    \begin{itemize}
        \item[(i)] Sequencing errors occur at every molecule in $T_1$;
        \item[(ii)] There exists some $m' \in U_1$ with $m' \neq m$ and $\lambda^{(m)}|_{T_2} = \lambda^{(m')}|_{T_2}$.
        \item[(iii)] A random variable $\Psi \sim {\rm Bernoulli}(p)$ generated by the adversary (independent of everything else) is equal to $1$. 
    \end{itemize}
    \item If the adversary is inactive, then they always set $\lambdatil_j = \lambda_j$, i.e., the correct molecule is received for every read.  In other words, the adversary ``does nothing''.
    \item If the adversary is active, then given $m' \in U_1$  defined in item (ii) above,\footnote{Note that this is where the adversary uses the knowledge of the decoding regions.  Knowing the decoding regions implies knowing the stopping times $t_m$, and thus knowing the set $U$.} for all $j \in T_1$, the adversary changes the molecules of $(\lambda^{(m)})_{j}$ to become the corresponding molecule $(\lambda^{(m')})_j$.  All other molecules are left unchanged.
\end{itemize}
Note that the choice $\Psi \sim {\rm Bernoulli}(p)$ implies that the adversary is active with probability at most $p$, even conditioned on any specific $f$.  It may appear unusual to have the adversary do nothing with probability $1-p$, but the intuition is that this helps mitigate the impact of the adversary on the \emph{average} number of reads taken.\footnote{This is specifically used in \eqref{eq:not_in_U}--\eqref{eq:combining_U} below.}

We now present two useful lemmas.  Recalling that $E_m$ is the set of decoded molecule sequences that lead to incorrect decoding when the message is $m$ (see \eqref{eq:Em}), we have the following.

\begin{lemma}
    Consider any $m$ and any $m' \in U_1$ with $m' \neq m$, and suppose that $\lambda^{(m)}|_{T_2} = \lambda^{(m')}|_{T_2}$.  Then, conditioned on $m$ being the true message, we have 
    \begin{equation}
        \p(\Lambdatil \in E_m \land \Psi=1|m) \geq p^{\delta M+1}. 
    \end{equation}
    \label{lem:unique_mt2}
\end{lemma}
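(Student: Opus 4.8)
The plan is to directly exhibit a specific event, of probability at least $p^{\delta M + 1}$, on which both $\Lambdatil \in E_m$ and $\Psi = 1$ hold. The natural candidate is the event that the adversary is active: recall that this requires (i) a sequencing error at every index in $T_1$, (ii) the existence of the relevant $m' \in U_1$ with $\lambda^{(m)}|_{T_2} = \lambda^{(m')}|_{T_2}$ (which we have assumed in the hypothesis), and (iii) $\Psi = 1$. Conditioned on $m$ being the true message (and on the fixed index sequence $f$, which is independent of $m$ and of the sequencing errors), events (i) and (iii) are independent: each of the $|T_1| \le \delta M$ reads in $T_1$ has a sequencing error independently with probability $p$, and $\Psi \sim {\rm Bernoulli}(p)$ is independent of everything. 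Hence this event has probability at least $p^{|T_1|} \cdot p \ge p^{\delta M + 1}$ (using $p \le 1$ so that a smaller number of required errors only increases the probability; if one wants exactly $\delta M$ one can pad $T_1$).

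The crux is then to argue that \emph{on this event}, the decoder indeed errs, i.e.\ $\Lambdatil \in E_m$. Here I would reason about what the decoder sees. Since $m' \in U_1 = U \cap \Phi$, when the true message were $m'$ with no sequencing errors, the decoder stops at time $t_{m'} \le c''M$ and outputs $m'$; write $\lambda^{(m')}$ restricted to its first $t_{m'}$ coordinates for that stopping sequence. Now suppose the true message is $m$ and the adversary is active. By construction the adversary modifies exactly the reads indexed by $T_1$, changing $(\lambda^{(m)})_j$ to $(\lambda^{(m')})_j$ for $j \in T_1$, and leaves all other reads untouched. Off $T_1$ and on $T_2$ we have $\lambda^{(m)}|_{T_2} = \lambda^{(m')}|_{T_2}$ by hypothesis, and $T_1 \cup T_2 \supseteq \{1,\dots,c''M\} \supseteq \{1,\dots,t_{m'}\}$ since $t_{m'} \le c''M$. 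Therefore the observed sequence $\Lambdatil$, restricted to its first $t_{m'}$ coordinates, coincides exactly with $\lambda^{(m')}$ restricted to its first $t_{m'}$ coordinates — which lies in $D_{m'}$. By the prefix-free property \eqref{eq:prefix_free}, once the observed prefix lies in a decoding region the decoder must stop there (it cannot have stopped earlier at a different region, since that would contradict prefix-freeness), so the decoder outputs $m' \ne m$, i.e.\ the observed sequence lies in $E_m = \bigcup_{m'' \ne m} D_{m''}$.

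The main obstacle I anticipate is the careful bookkeeping around the prefix-free / stopping-time argument: one must make sure that the decoder, run against the adversarially-modified stream, genuinely reaches time $t_{m'}$ without having already halted, and that the first $t_{m'}$ observed symbols really do agree with $\lambda^{(m')}$ on \emph{all} of $\{1,\dots,t_{m'}\}$ and not merely on $T_2$. This is where the hypothesis $\lambda^{(m)}|_{T_2} = \lambda^{(m')}|_{T_2}$ combines with the adversary's definition (which fixes up precisely the $T_1$ coordinates) to cover the whole prefix; the partition $\{1,\dots,c''M\} = T_1 \sqcup T_2$ from $f \in S(c'',R')$ and the bound $t_{m'} \le c''M$ are exactly what make this work. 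Once this deterministic implication is in hand, the probability bound from the previous paragraph finishes the proof.
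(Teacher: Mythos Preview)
Your proposal is correct and follows essentially the same approach as the paper's proof: bound the probability that the adversary is active by $p^{|T_1|}\cdot p \ge p^{\delta M+1}$, then argue that on this event the first $t_{m'} \le c''M$ observed symbols coincide with $\lambda^{(m')}$ (via the partition $T_1 \sqcup T_2$ and the hypothesis on $T_2$), so by prefix-freeness the decoder halts at $t_{m'}$ and outputs $m'$. If anything, your write-up is slightly more explicit than the paper's about why the observed prefix matches $\lambda^{(m')}$ coordinate-by-coordinate.
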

\begin{proof}
     With probability $p$ with have $\Psi=1$, and with probability at least $p^{\delta M}$ the required sequencing errors at the molecules indexed by $T_1$ (with $|T_1| \le \delta M$) occur.  Thus, the adversary is active with probability at least $p^{\delta M+1}$.  
     Supposing that the adversary is active, we have the following: 
     Since $m' \in U_1$ (and therefore $t_{m'} \leq c''M$), the first $c''M$ decoded molecules of exactly match $\lambda^{(m')}$ (i.e., those of message $m'$ if there were no sequencing errors), and the decoder will stop and make the decision at time $t_{m'}$, which is at most $c''M$. Since $m' \in U_1$, we have $m' \in \Phi$, which means the output of the decoder is equal to $m' \neq m$.  Due to the prefix-free property in \eqref{eq:prefix_free}, the decoder does not stop at any time earlier than $t_{m'}$.
\end{proof}

\begin{lemma}
    Among all $m\in U_1$, there can only be at most $\exp(R'M(\alpha-1)\log M)$ possible values of $\lambda^{(m)}|_{T_2}$, where $\alpha > 1$ is defined in \eqref{eq:alpha_def}.
    \label{lem:unique_v}
\end{lemma}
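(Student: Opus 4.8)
The plan is to count the number of distinct values that $\lambda^{(m)}|_{T_2}$ can take as $m$ ranges over $U_1$, by exploiting the index-based structure together with property (ii) of the set $S(c'',R')$, namely that $\{f(j)\}_{j \in T_2}$ takes at most $R'M$ distinct values. Recall that for an index-based code (Definition \ref{dfn:index_based}), the codeword $A_m$ contains exactly one molecule from each block $B_i$, and by the discussion following Definition \ref{def:index_seq}, once the index sequence $f$ is fixed, the entry $\lambda^{(m)}_j$ is determined by $f(j)$ together with the unique molecule of $A_m$ lying in the block $B_{f(j)}$. Therefore the subsequence $\lambda^{(m)}|_{T_2}$ is determined entirely by the restriction of $A_m$ to the blocks $\{B_i : i \in I\}$, where $I := \{f(j) : j \in T_2\}$ is the set of indices touched by $T_2$, and $|I| \le R'M$ by property (ii).

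First I would observe that two messages $m, m'$ satisfying $A_m \cap B_i = A_{m'} \cap B_i$ for every $i \in I$ must give rise to the same subsequence $\lambda^{(m)}|_{T_2} = \lambda^{(m')}|_{T_2}$. Hence the number of distinct values of $\lambda^{(m)}|_{T_2}$ (over all $m$, let alone over $m \in U_1$) is at most the number of ways of choosing one molecule from each block $B_i$ with $i \in I$. Each block has the same size (Definition \ref{dfn:index_based}), and since the inner codebook has $M^\alpha = \exp(\alpha \log M)$ molecules partitioned into $M$ blocks, each block has $M^{\alpha-1} = \exp((\alpha-1)\log M)$ molecules. Thus the number of choices is at most $\big(M^{\alpha-1}\big)^{|I|} \le \big(M^{\alpha-1}\big)^{R'M} = \exp\big(R'M(\alpha-1)\log M\big)$, which is exactly the claimed bound.

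There is really no serious obstacle here — the lemma is a direct combinatorial consequence of the index-based structure — but the one point that needs care is pinning down the block size: this requires using that the inner codebook has $\exp(\rin L) = M^{\alpha}$ elements (from the definitions of $\alpha$ and $\beta$ in Section \ref{sec:scaling}), that there are exactly $M$ blocks $B_i$, and that the blocks are disjoint and of equal size, so each has $M^{\alpha-1}$ elements. (Strictly, Definition \ref{dfn:index_based} only requires the $B_i$ to be disjoint and of equal size, not that they partition all of $X_{\rm in}$; but since each $B_i$ must have at least one molecule and there are $M^\alpha$ molecules total, each $B_i$ has at most $M^{\alpha-1}$ molecules, which is all that is needed for the upper bound.) I would also note explicitly that restricting to $m \in U_1$ (rather than all $m$) only helps, since we are upper-bounding a count over a subset.
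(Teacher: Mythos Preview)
Your proposal is correct and follows essentially the same approach as the paper: both arguments observe that $\lambda^{(m)}|_{T_2}$ is determined by the choice of one molecule from each block $B_i$ with $i$ in the set of at most $R'M$ distinct values of $\{f(j)\}_{j\in T_2}$, and then count $(M^{\alpha-1})^{R'M}$. Your extra remark that Definition~\ref{dfn:index_based} only guarantees $|B_i|\le M^{\alpha-1}$ (rather than equality) is a nice bit of additional care that the paper glosses over.
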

\begin{proof}
    Note that for all $m$, the subsequence $\lambda^{(m)}|_{T_2}$ is consistent in the sense that if there is a repeated index $f(j_1) = f(j_2)$, then $(\lambda^{(m)})_{j_1} = (\lambda^{(m)})_{j_2}$. Therefore, $\lambda^{(m)}|_{T_2}$ can be uniquely determined by a function $f' \,:\, T_2 \rightarrow \{1,2,\ldots, M^{\alpha-1}\}$ specifying each molecule within $B_{i}$ (see Definition \ref{dfn:index_based}) when the index is $i = f(j)$ (with $j \in T_2$).  Since there are at most $R' M$ distinct values in $\{f(j)\}_{j \in T_2}$ (see Definition \ref{def:setS}) and each $B_i$ has size $\frac{M^{\alpha}}{M} = M^{\alpha - 1}$, there are at most $(M^{\alpha-1})^{R'M} = \exp(R'M(\alpha-1)\log M)$ such choices of $f'$, and thus at most $\exp(R'M(\alpha-1)\log M)$ possible values of $\lambda^{(m)}|_{T_2}$. 
\end{proof}

Next, define $V \subseteq U$ as follows:
\begin{equation}
    V = \{ m \in U_1 \,:\, \lambda^{(m)}|_{T_2} \text{ is unique within }U_1 \}. \label{eq:defV}
\end{equation}
By Lemma \ref{lem:unique_v}, $|V| \leq \exp(R'M(\alpha-1)\log M)$, so with a uniformly randomly message $m$ over $\exp(RML) = \exp(RM(\alpha-1)\log M)$ values (see \eqref{eq:num_messages}), we have
\begin{equation}
    \p(m \in V) \leq \frac{\exp(R' M (\alpha-1)\log M)}{\exp(R_0 M (\alpha-1)\log M)} = \exp(-(R_0-R') M (\alpha-1)\log M). \label{eq:V_prob}
\end{equation}
This property will be used shortly.

Using Lemma \ref{lem:unique_mt2}, we have for all $m \in U_1 \setminus V$ (i.e., messages in $U_1$ with non-unique $\lambda^{(m)}|_{T_2}$) that
\begin{equation}
        \p(\Lambdatil \in E_m \land \Psi=1|m) \geq p^{\delta M+1},
\end{equation}
which implies
\begin{equation}
    \p(\Lambdatil \in E_m\land \Psi=1 ) \geq \p(m \in U_1 \setminus V) p^{\delta M+1}.
    \label{eq:u_minus_v}
\end{equation}

We can also observe a useful fact about messages $m \notin U$: The adversary sets $\Psi = 0$ (and is thus inactive) with probability at least $1-p$, and when this happens, we have $|\Lambdatil|>c''M$ by the definition of $U$ in \eqref{eq:setU}.  Hence, we have for $m \notin U$ that
\begin{equation}
    \p(|\Lambdatil|\leq c''M|m) \leq p, \label{eq:not_in_U}
\end{equation}
which implies
\begin{equation}
    \p(|\Lambdatil| \leq c''M) \leq p \p(m \notin U) + \p(m \in U) \leq p + \p(m \in U_1) + \p(m\in U_2). \label{eq:combining_U}
\end{equation}
Re-arranging gives $\p(m \in U_1) \ge \p(|\Lambdatil| \leq c''M) - p - \p(m \in U_2)$, which we  combine with \eqref{eq:u_minus_v} as follows:
\begin{align}
    &\p(\Lambdatil \in E_m \land \Psi=1) \\
    &\geq \p(m \in U_1 \setminus V) p^{\delta M+1} \\
    &\geq (\p(m \in U_1) - \p(m \in V)) p^{\delta M+1} \\
    &\geq p^{\delta M+1}\Big( \p(|\Lambdatil| \leq c''M) - p - \p(m \in V) -\p(m \in U_2)\Big).
    \label{em_u1}
\end{align}
Observe that if $m\in U_2$ and $\Psi=0$, then the adversary does nothing and the decoder outputs something other than $m$, so an error occurs. Therefore,
\begin{equation}
    \p(\Lambdatil \in E_m \land \Psi=0) \geq (1-p) \p(m \in U_2).
    \label{em_u2}
\end{equation}
The overall error probability is then lower bounded by the sum of \eqref{em_u1} and \eqref{em_u2}.  Note that $1-p > p^{\delta M+1}$, which means the resulting coefficient of $\p(m \in U_2)$ is positive, and lower bounding this term by zero gives
\begin{equation}
    \p(\Lambdatil \in E_m) \geq p^{\delta M+1}\Big( \p(|\Lambdatil| \leq c''M) - p - \p(m \in V) \Big).
\end{equation}
Lemma \ref{lem:main} then follows by substituting the bound on $\p(m \in V)$ from \eqref{eq:V_prob}.

\subsection{Proof of Theorem \ref{thm:converse2} (Converse Under a Weak Adversary)} \label{sec:pf_weaker}

Under our weak adversary setup (Assumption \ref{def:adversary3}), the adversary can no longer look into the future in order to identify which $R' M$ (or more) molecules to leave unchanged and which $\delta M$ (or fewer) to change.  Accordingly, we adopt an adversary that \emph{randomly} chooses $R'M$ indices of molecules to be unchanged, and attempts to change certain molecules in the remaining indices to cause an error (if possible).  While this makes it less likely that the adversary will succeed in its goal, it turns out to suffice for attaining a converse having the same error exponent.

Formally, the adversary is described as follows:
\begin{itemize}
    \item Let $\mathcal{I}$ be a uniformly randomly chosen set of $R' M$ indices in $\{1,\dotsc,M\}$, meaning there are $\binom{M}{R'M}$ possible choices of $\mathcal{I}$.
    \item In accordance with Assumption \ref{def:adversary1}, the adversary knows true message $m$; recall that $A_m$ is the corresponding (outer) codeword.
    \item If there exists $m' \ne m$ such that $A_m|_{\mathcal{I}} = A_{m'}|_{\mathcal{I}}$, the adversary will choose such an $m'$ uniformly randomly.  Otherwise, we say that $m'$ is unspecified.
    \item As in Section \ref{sec:pf_prob_f}, the adversary additionally generates $\Psi \sim {\rm Bernoulli}(p)$ (independent of everything else).   If $m'$ is specified and $\Psi=1$ then we say that the adversary is \emph{active}, and otherwise the adversary is \emph{inactive} and leaves all molecules unchanged.
    \item When active, the adversary attempts to change every molecule to match the corresponding molecules of $A_{m'}$. The adversary will only achieve this goal if all of the required sequencing errors occur.
\end{itemize}

Let $t_m(f)$ be the stopping time given index sequence $f$ and message $m$ when there are no sequencing errors, and define
\begin{equation}
    U(f) = \{m : t_m(f) \leq c''M\}.
\end{equation}
Analogous to \eqref{eq:defU1}--\eqref{eq:defU2}, let $\Phi(f)$ be the set of messages such that the decoder outputs $m$ when the index sequence is $f$ and there are no sequencing errors, and partition $U(f)$ as follows:
\begin{gather}
    U_1(f) = U(f) \cap \Phi(f), \\
    U_2(f) = U(f) \setminus \Phi(f) = U(f) \setminus U_1(f).
\end{gather}
Recall also the definition of $S(c'',R')$ in Definition \ref{def:setS}, and for each $f \in S(c'',R')$, let $T_1(f)$ and $T_2(f)$ be the sets meeting the conditions of Definition \ref{def:setS}, i.e., $|T_1(f)| \le \delta M$, and there are at most $R'M$ distinct values of $\{f(j)\}_{j \in T_2(f)}$. 

The following lemma follows easily from the preceding definitions.

\begin{lemma}
    If the message $m$ and the index sequence $f \in S(c'',R')$ satisfy all of the following conditions, then a decoding error occurs:
    \begin{itemize}
        \item The adversary's choice of $m'$ satisfies $m' \in U_1(f)$ (if no $m'$ satisfies $A_m|_{\mathcal{I}} = A_{m'}|_{\mathcal{I}}$, then $m'$ is unspecified and this condition is false); 
        \item $T_2(f) \subseteq \mathcal{I}$;
        \item For each index in $T_1(f)$, a sequencing error occurs at that index;
        \item The adversary's Bernoulli decision returns $\Psi=1$ (and thus the adversary is active).
    \end{itemize}
    \label{lem:misguess}
\end{lemma}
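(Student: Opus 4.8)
The plan is to argue that, under the four listed conditions, the decoder receives precisely the noiseless read sequence corresponding to message $m'$ (up to its stopping time), stops, and outputs $m' \neq m$, hence errs. First I would unpack what the four conditions give us. The condition $T_2(f) \subseteq \mathcal{I}$ means that on the index block $T_2(f)$, the adversary never tampers (since $\mathcal{I}$ is the set of indices the adversary leaves alone), so the molecules observed there are the true ones $\lambda^{(m)}|_{T_2(f)}$. But the condition $m' \in U_1(f)$ presupposes that $m'$ is specified, which by the adversary's construction means $A_m|_{\mathcal{I}} = A_{m'}|_{\mathcal{I}}$; combined with $T_2(f)\subseteq\mathcal{I}$ and the fact that the index sequence is identical for all messages, this forces $\lambda^{(m)}|_{T_2(f)} = \lambda^{(m')}|_{T_2(f)}$, exactly as in the strong-adversary argument (Lemma \ref{lem:unique_mt2}). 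On the complementary block $T_1(f)$, the adversary is active (because $\Psi = 1$ and $m'$ is specified) and each of these at most $\delta M$ indices does suffer a sequencing error (third condition), so the adversary successfully overwrites each such molecule with the corresponding molecule of $A_{m'}$. Putting the two blocks together: on all of $\{1,\dots,c''M\}$, which by Definition \ref{def:setS} is partitioned into $T_1(f)$ and $T_2(f)$, the decoded molecule sequence coincides with $\lambda^{(m')}$.

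Next I would invoke $m' \in U_1(f)$ more fully: $m' \in U(f)$ gives $t_{m'}(f) \le c''M$, so the noiseless-$m'$ read sequence already triggers a decoder stop by time $c''M$; and $m' \in \Phi(f)$ means that at that stopping time the decoder outputs $m'$. Since the decoder's behavior is a deterministic function of the observed decoded-molecule sequence (Definition \ref{dfn:use_inner} together with Assumption \ref{def:adversary1}), and since we have just shown this observed sequence agrees with $\lambda^{(m')}$ at least up through time $c''M$, the decoder stops no later than $t_{m'}(f)$ and outputs $m'$. The prefix-free property \eqref{eq:prefix_free} rules out stopping strictly before $t_{m'}(f)$ with some other decision (no proper prefix of the length-$t_{m'}(f)$ sequence lies in any decoding region, since that prefix is itself a prefix of a codeword-$m'$ observation). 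Hence the decoder outputs exactly $m'$, and since $m' \neq m$ this is a decoding error, i.e., $\Lambdatil \in E_m$.

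The only mild subtlety — and the step I would be most careful about — is the direction of the equality $A_m|_{\mathcal{I}} = A_{m'}|_{\mathcal{I}} \Rightarrow \lambda^{(m)}|_{T_2(f)} = \lambda^{(m')}|_{T_2(f)}$, which requires noting that the realized read block on indices $T_2(f)$ depends only on $f$ and on the restriction of the codeword to the \emph{index classes} $\{f(j) : j \in T_2(f)\}$, all of which lie in $\mathcal{I}$; matching codewords on $\mathcal{I}$ therefore forces matching read sequences on $T_2(f)$ regardless of which specific molecule within each class is sampled. Everything else is bookkeeping: the four conditions were engineered precisely so that the observed sequence equals the noiseless sequence of a confusable message $m'$ that the decoder accepts early. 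I would state the argument in this order — (i) $T_2(f)$ block is untouched and matches $m'$, (ii) $T_1(f)$ block is successfully rewritten to $m'$, (iii) therefore the whole prefix matches $\lambda^{(m')}$, (iv) $m' \in U_1(f)$ plus prefix-freeness forces output $m' \neq m$ — and conclude that an error occurs.
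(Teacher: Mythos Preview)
Your argument is correct and mirrors the paper's (one-sentence) proof: show that under the four conditions the observed sequence agrees with $\lambda^{(m')}$ on all of $\{1,\dots,c''M\}$, then invoke $m'\in U_1(f)$ and prefix-freeness to force the decoder to output $m'\neq m$. One minor clarification: $\mathcal{I}$ is not literally ``the set of indices the adversary leaves alone'' (the active weak adversary attempts to overwrite \emph{every} read to match $m'$), but since $A_m|_{\mathcal{I}} = A_{m'}|_{\mathcal{I}}$ forces $\lambda^{(m)}_j = \lambda^{(m')}_j$ whenever $f(j)\in\mathcal{I}$, the observed molecule at any $j\in T_2(f)$ equals $\lambda^{(m')}_j$ regardless of whether a sequencing error occurs there---so your subsequent deduction is unaffected.
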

\begin{proof}
    Observe that under the above conditions, the adversary can modify all molecules to match the message $m'$, and since $m' \in U_1(f)$, the decoder stops before seeing $c''M$ molecules and declares that the message is $m'$.
\end{proof}

Next, we give a useful characterization of the probability of each possible $m'$ occurring.

\begin{lemma}
    Let $m_0$ be any message such that $A_{m_0}|_{\mathcal{I}}$ is not unique. Then the adversary's (random) choice of $m'$ satisfies $\p(m'=m_0) = \exp(-R_0M(\alpha-1)\log M)$.  (Note that we do not condition on the message $m$ here, but rather this probability is averaged over all possible values of $m$).
    \label{lem:possible_m'}
\end{lemma}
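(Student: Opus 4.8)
The plan is to exploit the symmetry inherent in the adversary choosing $m'$ uniformly at random among all admissible candidates. Fix any realization of $\mathcal{I}$ for which $A_{m_0}|_{\mathcal{I}}$ is not unique (this is precisely the hypothesis of the lemma), and partition the set of $\exp(RML)$ messages into equivalence classes according to the value of the restriction $A_{\cdot}|_{\mathcal{I}}$. Let $\mathcal{C}$ be the equivalence class containing $m_0$ and write $k = |\mathcal{C}|$; by the non-uniqueness hypothesis, $k \ge 2$.

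First I would condition on the true (uniformly random) message $m$ and enumerate the cases. If $m \notin \mathcal{C}$, then $A_m|_{\mathcal{I}} \ne A_{m_0}|_{\mathcal{I}}$, so $m_0$ is never an admissible candidate and $\p(m' = m_0 \mid m) = 0$. If $m = m_0$, the adversary always selects some $m' \ne m$, so again $\p(m' = m_0 \mid m) = 0$. In the remaining case $m \in \mathcal{C} \setminus \{m_0\}$, the set of admissible candidates is exactly $\mathcal{C} \setminus \{m\}$, which is nonempty (it has size $k-1 \ge 1$), so the adversary necessarily selects some $m'$, and by uniformity $\p(m' = m_0 \mid m) = \frac{1}{k-1}$.

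Averaging over $m$ is then immediate: exactly $k-1$ messages fall into the last case, each occurring with probability $\exp(-RML)$, so
\[
\p(m' = m_0) \;=\; (k-1)\cdot \exp(-RML)\cdot \frac{1}{k-1} \;=\; \exp(-RML) \;=\; \exp\big(-R_0 M (\alpha-1)\log M\big),
\]
where the last equality uses \eqref{eq:num_messages} together with \eqref{eq:rbar_def}. Since this value does not depend on the particular realization of $\mathcal{I}$, the same conclusion holds after averaging over all $\mathcal{I}$ making $A_{m_0}|_{\mathcal{I}}$ non-unique. There is essentially no real obstacle here; the only point requiring a moment's care is the cancellation of the two $(k-1)$ factors, which relies on the class $\mathcal{C}\setminus\{m\}$ seen from a sibling $m$ having the same size $k-1$ as the number of siblings of $m_0$. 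The adversary's Bernoulli variable $\Psi$ is irrelevant to this computation, since it only governs whether the adversary is active, not which $m'$ is chosen.
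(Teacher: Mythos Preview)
Your proof is correct and follows essentially the same approach as the paper: defining the equivalence class $\mathcal{C}$ (the paper's $\mathcal{M}$) of messages sharing $A_{m_0}|_{\mathcal{I}}$, conditioning on the true message $m$, and observing that the factor $|\mathcal{C}|-1$ cancels between the probability of $m$ landing in $\mathcal{C}\setminus\{m_0\}$ and the conditional probability $1/(|\mathcal{C}|-1)$ of the adversary picking $m_0$. Your case analysis is slightly more explicit, but the argument is identical.
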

\begin{proof}
Let $\mathcal{M} = \mathcal{M}(m_0,\mathcal{I})$ be the set of all messages that share the same $A_{m_0}|_{\mathcal{I}}$. 
For each $m \in \mathcal{M}$ with $m \neq m_0$, the probability of the adversary choosing $m' = m_0$ conditioned on the true message being $m$ is exactly $\frac{1}{|\mathcal{M}|-1}$.  Recalling from \eqref{eq:num_messages} that there are $\exp(R_0 M(\alpha-1)\log M)$ messages in total, we also have $\p(m \in \mathcal{M} \land m \neq m_0) = \frac{|\mathcal{M}|-1}{\exp(R_0 M(\alpha-1)\log M)}$.  Combining the preceding two observations, we obtain  $\p(m'=m_0) = \exp(-R_0 M(\alpha-1)\log M)$ as desired.
\end{proof}

We can now proceed to compute the probability that the conditions of Lemma \ref{lem:misguess} hold, initially conditioning on fixed $f$ and $\mathcal{I}$. 
Define the set
\begin{equation}
    V(\mathcal{I}) = \{ m : A_m|_{\mathcal{I}} \text{ is unique}\}.
\end{equation}
By Lemma \ref{lem:possible_m'}, we have
\begin{align}
    \p(m' \in U_1(f) | f,\mathcal{I})
        \geq \frac{|U_1(f)\setminus V(\mathcal{I})|}{\exp(R_0M(\alpha-1)\log M)}. \label{eq:pm'_lb}
\end{align}
Note that if $m \notin U(f)$, then recalling that with probability $1-p$ the adversary generates $\Psi = 0$ and does nothing, we have $\p(|\Lambdatil|\leq c''M|m,f,\mathcal{I}) \leq p$. Hence,
\begin{align}
    &\p(|\Lambdatil| \leq c''M|f,\mathcal{I}) \nonumber \\ 
        &\leq \p(m \in U(f)|f,\mathcal{I}) + \sum_{m \notin U(f)} \p(m)\cdot p\\
        &\leq \p(m \in U_1(f)\setminus V(\mathcal{I})|f,\mathcal{I}) + \p(m \in V(\mathcal{I})|f,\mathcal{I}) + \p(m \in U_2(f)|f,\mathcal{I}) + p.
        \label{eq:size_y}
\end{align}
We evaluate the first term as follows:
\begin{align}
    \p(m \in U_1(f)\setminus V(\mathcal{I})|f,\mathcal{I}) &= \p(m \in U_1(f)\setminus V(\mathcal{I})) \\
    &= \frac{|U_1(f)\setminus V(\mathcal{I})|}{\exp(R_0M(\alpha-1)\log M)},
\end{align}
where the first step uses that $F$ and $\mathcal{I}$ are independent of $m$, and the second step uses the fact that $m$ is uniform over $\exp(R_0M(\alpha-1)\log M)$ messages.  Substituting into \eqref{eq:size_y} and re-arranging, we obtain
\begin{equation}
    \frac{|U_1(f)\setminus V(\mathcal{I})|}{\exp(R_0M(\alpha-1)\log M)} \ge \p(|\Lambdatil| \leq c''M|f,\mathcal{I}) - \p(m \in U_2(f)|f,\mathcal{I}) - \p(m \in V(\mathcal{I})|f,\mathcal{I}) - p, \label{eq:frac_lb}
\end{equation}
meaning the right-hand side lower bounds $\p(m' \in U_1(f) | f,\mathcal{I})$ via \eqref{eq:pm'_lb}.

Next, observe that each $A_m|_{\mathcal{I}}$ can be described by a function from $\mathcal{I}$ to the molecule in $A_m$ with the corresponding index in $\mathcal{I}$, so that there are at most $(M^{\alpha-1})^{R'M}$ such choices (similar to the proof of Lemma \ref{lem:unique_v}). Therefore,
\begin{equation}
|V(\mathcal{I})| \leq (\text{number of possible $A_m|_{\mathcal{I}}$}) \leq (M^{\alpha-1})^{R'M} = \exp(R'M(\alpha-1) \log M), \label{eq:V_size}
\end{equation}
which implies
\begin{equation}
\p(m \in V(\mathcal{I})|f,\mathcal{I}) \leq \frac{\exp(R'M(\alpha-1) \log M)}{\exp(R_0M(\alpha-1)\log M}. \label{eq:V_o1}
\end{equation}
We now check the conditions of Lemma \ref{lem:misguess}:
\begin{itemize}
    \item By \eqref{eq:pm'_lb}, the first condition of Lemma \ref{lem:misguess} holds with probability (conditioned on $f,\mathcal{I}$) at least
    \begin{equation}
        \frac{|U_1(f)\setminus V(\mathcal{I})|}{\exp(R_0M(\alpha-1)\log M)} \geq \p(|\Lambdatil|\leq c''M |f,\mathcal{I}) - \p(m \in U_2(f)|f,\mathcal{I}) - o(1),
    \end{equation}
    where the inequality follows by applying \eqref{eq:frac_lb} and noting that the last two terms therein are $o(1)$ due to \eqref{eq:V_o1} (with $R' < R_0$) and $p=o(1)$.
    \item Conditioned on any $f \in S(c'', R')$, the second condition occurs with probability at least $2^{-M}$; this is because at least one such $\mathcal{I}$ exists, and the uniform distribution on $\mathcal{I}$ is over at most $2^M$ elements.
    \item Conditioned on any $f\in S(c'', R')$, the third condition occurs with probability at least $p^{\delta M}$, since $|T_1(f)| \le \delta M$.
    \item Lastly, the adversary sets $\Psi=1$ with probability $p$.
\end{itemize}
Combining these and averaging over $\mathcal{I}$, the error probability conditioned on any fixed $f \in S(c'',R')$ is lower bounded as follows:
\begin{equation}
    \p(\text{error} \wedge \Psi=1|f) \geq (\p(|\Lambdatil|\leq c''M |f) -\p(m \in U_2(f)|f)- o(1))\cdot 2^{-M} \cdot p^{\delta M} \cdot p,
\end{equation}
where we used the fact that the relevant sources of randomness (for $m'$, $\mathcal{I}$, sequencing errors, and $\Psi$) are independent.

If the adversary is not active, then an error will occur whenever $m \in U_2(f)$ by the definition of $U_2$. Therefore,
\begin{equation}
    \p(\text{error} \wedge \Psi=0|f) \geq \p(m \in U_2(f) | f)(1-p).
\end{equation}
Adding together the terms for $\Psi=0$ and $\Psi=1$, we see that the overall coefficient to $\p(m \in U_2(f)|f)$ is positive, and lower bounding this term by zero gives
\begin{align}
    \p(\text{error} | f)  \geq (\p(|\Lambdatil|\leq c''M |f)- o(1))\cdot 2^{-M} \cdot p^{\delta M}. \label{eq:two_cases}
\end{align}
Averaging \eqref{eq:two_cases} over $f$, it follows that
\begin{align}
    \p(\text{error})
    &\geq \sum_{f \in S(c'',R')} \p(f) \cdot \p(\Lambdatil \in E_m | f)\\
    & \geq \sum_{f\in S(c'',R')} \p(f) (\p(|\Lambdatil|\leq c''M |f) - o(1))\cdot 2^{-M} \cdot p^{\delta M+1}.
\end{align}
This result matches \eqref{eq:lb_Em_final} from the proof of Theorem \ref{thm:converse}, but with an extra $2^{-M}$ factor.  The remainder of the proof is identical to that of Theorem \ref{thm:converse} but with this extra factor incorporated, and thus the final scaling in the theorem statement is $o(2^{-M} p^{\delta M+1})$ instad of $o(p^{\delta M + 1})$.

\section{Conclusion}

We have established achievablity and converse results for error exponents of index-based concatenated codes under a variable number of reads.  Our achievability result illustrates a significant reduction in error probability compared to the fixed-reads setting (from $e^{-\Theta(M)}$ to $e^{-\Theta(M\log\frac{1}{p})}$), and our converse shows result that our achievability result is tight in certain regimes where a relatively small number of reads is performed.

Perhaps the most immediate direction for future research is to establish converse bounds and/or improved achievability bounds that are suited to a higher average number of reads.  As exemplified in Figure \ref{fig:examples}, we do not provide a converse in such regimes, and our achievable exponent approaches a finite limit no matter how many reads are done.  (This should not be unavoidable, since with infinitely many reads the error probability is zero, corresponding to an infinite error exponent.)  Another interesting direction would be to study different models of sequencing errors, particularly random instead of adversarial.  

\bibliographystyle{IEEEtran}
\bibliography{general}

\end{document}